\newtheorem{theorem}{Theorem}
\newtheorem{remark}{Remark}
\newtheorem{proposition}{Proposition}
\theoremstyle{plain}
\def\BibTeX{{\rm B\kern-.05em{\sc i\kern-.025em b}\kern-.08em
    T\kern-.1667em\lower.7ex\hbox{E}\kern-.125emX}}
\begin{document}
\captionsetup[figure]{labelformat={default},labelsep=period,name={Fig.}}

\title{Downlink Performance of Cell-Free Massive MIMO for LEO Satellite Mega-Constellation}

\author{Xiangyu Li, 
        and Bodong Shang,~\IEEEmembership{Member,~IEEE}
\thanks{This work was supported in part by Zhejiang Provincial Natural Science Foundation of China under Grant No. LQN25F010003, in part by the YongRiver Scientific and Technological Innovation Project No. 2023A-187-G, and in part by the State Key Laboratory of Integrated Services Networks.}
\thanks{Some intermediate results were presented in part at the International Conference on Ubiquitous Communication (Ucom) 2024 [DOI: 10.1109/Ucom62433.2024.10695881]. See \cite{li2024analytical} for details.}
\thanks{\textit{(Corresponding author: Bodong Shang)}}
\thanks{X. Li and B. Shang are with Zhejiang Key Laboratory of Industrial Intelligence and Digital Twin, Eastern Institute of Technology, Ningbo, China, and also with the State Key Laboratory of Integrated Services Networks, Xidian University, Xi’an, China (e-mails: xyli@eitech.edu.cn, bdshang@eitech.edu.cn).}}

\maketitle

\begin{abstract}
Low-earth orbit (LEO) satellite communication (SatCom) has emerged as a promising technology to improve wireless connectivity in global areas. Cell-free massive multiple-input multiple-output (CF-mMIMO), an architecture proposed for next-generation networks, has yet to be fully explored for LEO satellites. In this paper, we investigate the downlink performance of a CF-mMIMO LEO SatCom network, where multiple satellite access points (SAPs) simultaneously serve the corresponding ground user terminals (UTs). Using tools from stochastic geometry, we model the locations of SAPs and UTs on surfaces of concentric spheres using Poisson point processes (PPPs) and \textcolor{black}{present expressions on transmit and received signals, signal-to-interference-plus-noise ratio (SINR).}
Then, we derive the coverage probabilities in fading scenarios, considering significant system parameters such as the Nakagami fading parameter, the number of UTs, the number of SAPs, the orbital altitude, and the service range affected by the dome angle. Finally, the analytical model is verified by extensive Monte Carlo simulations. Simulation results indicate that stronger line-of-sight (LoS) effects and a more comprehensive service range of the UT result in a higher coverage probability, despite the presence of multi-user interference (MUI). Moreover, we found that there exist optimal numbers of UTs that maximize system capacity for different orbital altitudes and dome angles, providing valuable insights for system design.
\end{abstract}

\begin{IEEEkeywords}
Satellite-terrestrial communications, low-earth orbit (LEO) satellite, cell-free massive MIMO, coverage probability, stochastic geometry.
\end{IEEEkeywords}

\IEEEpeerreviewmaketitle

\section{Introduction}

\IEEEPARstart{T}{he} advancement of wireless communications and the explosion of high-data-rate demands have led to the expansion of network coverage, which is currently supported by terrestrial networks (TNs). As an essential component of non-terrestrial networks (NTNs), satellite communication (SatCom) is a critical enabler to realize the continuous and ubiquitous connectivity provision, especially in areas with inadequate wireless access \cite{li2021downlink}. In recent years, low-earth orbit (LEO) satellites, typically located at altitudes between $500$ km and $2,000$ km, have gained broad research interests due to their potential to provide worldwide Internet access with improved data rates \cite{you2020massive}. Compared with the medium-earth orbit (MEO) or geostationary-earth orbit (GEO) counterparts, LEO satellites are more advantageous due to relatively shorter signal propagation delay, decreased path-loss and power consumption, and lower production and launch costs \cite{you2022hybrid}.

In the LEO SatCom network, multi-beam transmission techniques have been widely used to obtain higher throughput and data rates for numerous user terminals (UTs) \cite{vazquez2016precoding}. One possible utilization of multi-beam transmission in SatCom is its integration with massive multiple-input multiple-output (MIMO) \cite{shang2024multi}, an enabling and widely adopted technology for the fifth-generation (5G) and beyond communication systems. In massive MIMO (mMIMO)-enhanced LEO SatCom networks, significantly higher data rates can be achieved than their traditional single-beam counterparts \cite{goto2018leo,femenias2022mobile}. 
However, similar to inter-cell interference of small cells in TNs with collocated mMIMO, inter-satellite interference (ISI) in NTNs exists due to the service area boundary and causes a severe performance downgrade for served UTs\cite{lu2014overview}. Therefore, the current satellite network, where a single satellite serves multiple UTs with potential overlap from neighboring satellite coverage areas, may be inadequate for UTs requiring high levels of coverage and data rates \cite{li2025advancing}. 
This directed our research toward a distributed mMIMO network that is theoretically well-suited for cell-free (CF) mMIMO LEO SatCom networks.

Stochastic geometry is a valuable mathematical tool for analyzing and evaluating network performance, such as coverage probability, desired signal, and interference statistics from a system-level perspective \cite{yang2020optimizing}. 
Specifically, coverage analysis research has become more tractable, such as the coverage probability in \cite{lu2015stochastic,fang2020millimeter,talgat2020stochastic} for TNs and in SatCom \cite{talgat2020stochastic,kim2024coverage} for NTNs.
However, the aforementioned works fail to account for the combined impacts of critical factors such as beamforming, fading parameters, number of satellites, and their orbital altitudes in CF-mMIMO LEO SatCom networks. 
Since coverage probability is functionally related to these factors, it can be expressed in closed-form equations.
Inspired by these observations, this paper develops an analytical model for the coverage and capacity of CF-mMIMO LEO SatCom networks using stochastic geometry.
These functional mapping relationships can significantly reduce computational complexity compared to system-level simulations.

\subsection{Related Works}
Many papers have studied the performance of LEO satellite networks.
The coverage probability for downlink transmissions was investigated in \cite{okati2020downlink}, where satellite locations were distributed as a binomial point process (BPP). However, crucial instruments such as the distance distributions of the BPP are more challenging to analyze compared to those of a Poisson point process (PPP) in unbounded space and therefore, PPP-based satellite modeling was adopted in \cite{park2022tractable}. 
Many other works also used PPP to model the constellation. For example, ultra-dense LEO satellite constellations were studied in \cite{deng2021ultra} to find the minimum number of satellites while satisfying the backhaul requirement of each UT. The uplink interference and performance of satellite mega-constellations were examined in \cite{jia2022uplink,jia2023analytic} under the impacts of intra-constellation interference. 
The above work, although similar to our studied network, assumed that each UT was connected to a single satellite, thus overlooking the possibility of satellite cooperation.

The CF-mMIMO system, where each UT is coherently served in the same time-frequency resource block (RB) by a large number of geographically-located access points (APs) \cite{ngo2017cell}, has gained extensive research attention recently. 
Although there has been extensive research on CF-mMIMO in TNs such as \cite{zhao2023energy,sun2023uplink}, relatively few studies have explored its application in LEO SatCom. 
An LEO satellite was initially considered an ``add-on'' to improve terrestrial CF networks in \cite{liu2020cell,li2021satellite}. However, they fail to introduce the CF-mMIMO architecture to large-scale LEO satellite networks.
The authors in \cite{abdelsadek2021future,abdelsadek2022distributed} integrated CF-mMIMO into LEO satellite networks and developed an optimization framework to improve spectral efficiency. 
The benefits of CF-mMIMO in LEO satellite networks were further evaluated in \cite{abdelsadek2023broadband} for broadband connectivity with multi-antenna satellites and handheld devices. 
In addition, the CF-mMIMO SatCom network was also studied from physical layer orthogonal time-frequency-space (OTFS) \cite{yang2023distributed}, uplink transmissions with channel state information (CSI) uncertainties \cite{omid2023capacity,omid2023space,omid2024tackling}, and dynamic clustering of satellite access points (SAPs) \cite{humadi2024distributed}.
The authors in \cite{kim2024cell} investigated CF massive NTNs for optimization of spectral efficiency and coverage probability, while overlooking global capacity analysis.
Note that the above works focused on algorithm designs and assumed that the served UTs were confined in a small region on the earth, neglecting other coverage areas where inter-user interference could arise, as well as downlink interference from non-cooperative satellites.
Our previous work \cite{li2024analytical} examined a basic model for coordinated multi-satellite joint transmissions. \textcolor{black}{However, its reliance on Rayleigh fading made it impractical for channels with line-of-sight (LoS) components.}
\textcolor{black}{Although satellite cooperation was considered by the authors in \cite{shang2023coverage,richter2020downlink}, ISI persisted, their cooperative satellite number remains limited and cannot form CF-mMIMO for LEO satellite networks.}
In addition, its tractability decreased as the number of satellites increased.

\subsection{Contributions and Paper Organization}
To thoroughly investigate the downlink performance of CF-mMIMO LEO SatCom networks from a system-level perspective, we incorporate essential factors of satellite networks for analysis using tools from stochastic geometry. 
The PPP model is adopted to model LEO satellite mega-constellation, providing analytical results with broad applicability and high tractability.
The main contributions are summarized as follows:
\begin{itemize} 
\item \textit{CF-mMIMO LEO SatCom Network Design:} 
We establish a CF-mMIMO system in the LEO satellite network using stochastic geometry. 
The locations of SAPs and UTs are modeled on two concentric sphere surfaces, one with the SAPs' orbital radius and the other with the earth's radius, respectively, based on two independent homogeneous PPPs. Simulation results demonstrate that the PPP constellation model can fit perfectly with practical constellations. Considering the shape of the earth and the service range, each UT is assigned a group of SAPs for service delivery according to its dome angle.

\item \textit{Modeling and Analysis:}
We introduce the related distance distribution for the service links of a typical UT and characterize the average number of UTs served by each SAP.
Then, we provide expressions for the distribution of desired signal strength (DSS), the average ISI, and the average multi-user interference (MUI).
The coverage probability of a typical user is then presented accordingly, with both large-scale path-loss and small-scale Nakagami-$m$ fading. 

\item \textit{Network Design Insights:} 
We quantitatively analyze the performance of the CF-mMIMO LEO SatCom network concerning key network parameters, including the fading parameter, the service range of CF-mMIMO LEO SatCom network, the total number of SAPs, the number of UTs, and SAPs' orbital altitudes.
As the dome angle of the typical UT increases, the gain in desired signal power outweighs the increase in received MUI, leading to improved coverage probability, albeit with increased signaling overhead.
In addition, an increase in both the orbital altitude and the number of SAPs contributes positively to enhanced coverage performance. While system capacity increases as the number of UTs grows within a certain range, the per-user capacity continues to drop, exhibiting a tradeoff between maximizing system capacity and ensuring minimum throughput for each UT.
\end{itemize}

The rest of this paper is organized as follows. 
Section \uppercase\expandafter{\romannumeral2} describes the system model for the studied CF-mMIMO SatCom network, including spatial distributions, channel models, and downlink transmissions. 
In Section \uppercase\expandafter{\romannumeral3}, we present the statistical properties of the network and derive analytical expressions for the coverage probabilities. 
Simulations and numerical results are provided in Section \uppercase\expandafter{\romannumeral4}.
Finally, Section \uppercase\expandafter{\romannumeral5} concludes the paper.

\textit{Notations:}
Bold lowercase letters denote column vectors. 
The superscripts $\left ( \cdot  \right )^T$, $\left(\cdot\right)^{*}$, and $\left(\cdot\right)^H$ are used to represent conjugate, transpose, and conjugate-transpose, respectively.
The real-part operator, expectation operator, the statistical probability, and the Euclidean norm are represented as $\mathrm{Re}[\cdot]$, $\mathbb{E}\left\{\cdot\right\}$, 
$\mathbb{P}\left\{\cdot\right\}$ and $\|\cdot\|$, respectively.
$y \sim \mathcal{CN}(\mu,\sigma^2)$ denotes a complex circularly symmetric Gaussian random variable with mean $\mu$ and variance $\sigma^2$, while $\bm{y} \sim \mathcal{CN}(\mu,\mathbf{R})$ denotes that with mean $\mu$ and correlation matrix $\mathbf{R}$.

\vspace{-3mm}
\section{System Model}
In this section, we consider a downlink CF-mMIMO LEO SatCom network over the Nakagami-$m$ fading channel, where many SAPs simultaneously serve the UTs on the earth's surface. For the sake of analysis, each SAP and each UT are equipped with a single antenna. 
Following the basic principle of CF-mMIMO, SAPs are connected to a central server (CS) through high-speed optical inter-satellite links (OISLs)\footnote{High-speed OISLs are well-suited for this scenario due to their high throughput and low latency characteristics \cite{shang2025channel}, making them ideal for backhaul communications.}, known as backhaul links, to exchange necessary control signals for cooperative transmissions of SAPs. 

The CS can be deployed on a central satellite, e.g., a GEO satellite with sufficient power and computing capabilities, following a similar mechanism as in \cite{abdelsadek2021future,abdelsadek2022distributed,abdelsadek2023broadband,humadi2024distributed,shang2023coverage}.
Regarding the scalability of this network, the OISL infrastructure can support large numbers of SAPs, especially when combined with future hierarchical satellite network topologies and advanced scheduling algorithms.

\captionsetup{font={scriptsize}}
\begin{figure*}[t]
\begin{center}
\centering
\vspace{-2mm}
\setlength{\abovecaptionskip}{+0.2cm}
\setlength{\belowcaptionskip}{+0.0cm}
\centering
  \includegraphics[width=6.65in, height=2.0in]{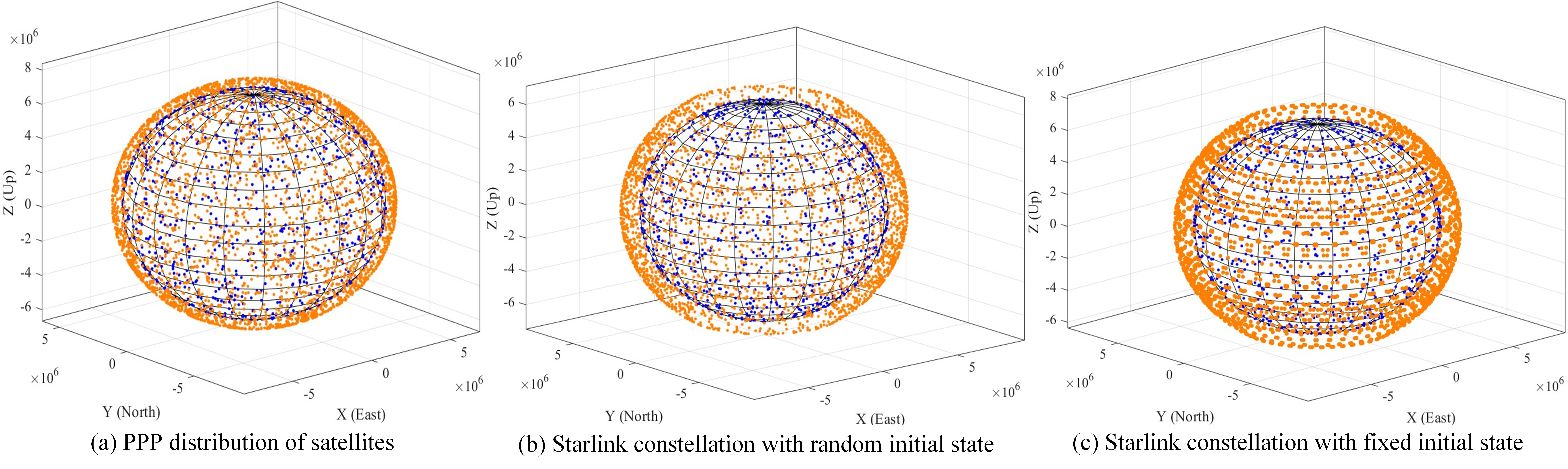}
\renewcommand\figurename{Fig.}
\caption{\scriptsize An illustration of different satellite constellations.}
\label{fig:constellations}
\end{center}
\vspace{-6mm}
\end{figure*}

\vspace{-4mm}
\subsection{Spatial Distribution Model}
Denote the radius of the Earth as $R_\text{E}$, and that of a sphere $S$ where the SAPs are located as $R_\text{S}$. 
An illustration of PPP-distributed satellites and PPP-distributed UTs are shown in Fig. \ref{fig:constellations}(a).
The distributions of UTs and SAPs are independent and each follows a homogeneous spherical Poisson point process (SPPP).
In a practical constellation such as Starlink, the distribution of satellites is correlated.
For comparison, we simulate the Starlink constellations with random and fixed initial states, respectively.
Note that in a random initial state shown in Fig. \ref{fig:constellations}(b), the initial position and orbital parameters of each satellite are randomly distributed within a certain range, which can simulate the randomness of constellations at the time of deployment. 
The satellite distribution in Fig. \ref{fig:constellations}(b) closely resembles that observed in practical constellations.
In a fixed initial state shown in Fig. \ref{fig:constellations}(c), the initial position and orbital parameters of all satellites are set in advance and strictly fixed to represent the ideal precise deployment situation. The satellites follow strict geometric symmetries in their orbital planes and phases in order to minimize relative deviations between satellites. 
Intuitively, the PPP model can perfectly simulate a practical constellation by setting a random initial state, which will be verified in Section \uppercase\expandafter{\romannumeral5}.

On this basis, we assume that there are $L$ SAPs in the constellation at the same orbital altitude $H_\text{S}=R_\text{S}-R_\text{E}$, and there are a total of $K$ UTs on the earth surface that need to be served by the CF-mMIMO SatCom network. Taking the mean of the Poisson random variable, the PPP density of UTs $\lambda_\text{U}$ and that of SAPs $\lambda_\text{S}$ is approximated by $\lambda_\text{U} = \frac{K}{4\pi {R_\text{E}}^2}$ and $\lambda_\text{S} = \frac{L}{4\pi {R_\text{S}}^2}$, respectively.

\vspace{-3mm}
\subsection{Path-Loss and Channel Model}
For satellite-terrestrial signal transmissions, both large-scale path-loss attenuation and small-scale fading are taken into account. Specifically, a valid assumption is that the fading channel between every two links is assumed to be uncorrelated, since the SAPs are geographically distributed on the sphere $S$. 
The channel coefficient between the $l$-th SAP and the $k$-th UT is denoted as 
\begin{equation}
    g_{lk}={\beta_{lk}}^{1/2} h_{lk},
\end{equation}
where $\beta_{lk}$ denotes the large-scale fading and $h_{lk}$ is the small-scale fading.
For large-scale fading, $\beta_{lk}$ can be represented using a distance-dependent model, i.e. $\beta_{lk}=\beta_0 {d_{lk}}^{-\alpha}$, where $\beta_0$ is the path-loss at a reference distance, $d_{lk}$ is the distance between the $l$-th SAP and the $k$-th UT, and $\alpha$ is the path-loss exponent\footnote{For analytical tractability and to focus on the performance of the proposed framework, we assume a common path-loss exponent $\alpha$ for all satellite-user links. This simplification is commonly adopted in 3GPP standards \cite{3gpp.38.811,3gpp.38.821} and also in prior satellite communication works, especially mega, ultra-dense constellations \cite{deng2021ultra,jia2022uplink,jia2023analytic}. In real-world scenarios, the exponent may vary with different environmental conditions; incorporating heterogeneous path-loss exponents is a valuable direction for future research.}. 
Since antenna gain varies according to the off-axis angle \cite{3gpp.38.811}, the transmit antenna gain of the desired signal and that of the interference signal should be modeled differently \cite{park2022tractable}. 
Denote the transmit main-lobe or side-lobe antenna gain of the SAP as $G_\text{t}^{\text{ml/sl}}$. Then, the total effective antenna gain with main-lobe or side-lobe components can be written as
\begin{equation}
    G_{\text{ml/sl}} = G_\text{t}^{\text{ml/sl}} G_\text{r} \left( \frac{c}{4\pi f_\text{c}} \right)^2,
\end{equation}
where $G_\text{r}$ is the receive antenna gain of the UT, 
$c$ is the speed of light, and $f_\text{c}$ is the carrier frequency.
The small-scale fading is modeled via the Nakagami-$m$ distribution, whose probability density function (PDF) is given by
\begin{equation}
    f_{|h_{lk}|}(x) = \frac{2m^m}{\Omega^m\Gamma(m)}x^{2m-1}e^{-mx^2},
    \label{Formula:Nakagami-m}
\end{equation}
with $x\in [0,\infty]$, $m\in [0.5,\infty]$ and $\Gamma(m)=(m-1)!$ for integer $m>0$. The Nakagami-$m$ model is selected because of its adaptation to various small-scale fading conditions, e.g., Rayleigh channel when $m=1$ and Rician-$\mathcal{K}$ channel when $m=\frac{(\mathcal{K}+1)^2}{2\mathcal{K}+1}$  \cite{park2022tractable}. 
We also have $\mathbb{E}\left\{ {{\left| {{h}_{lk}} \right|}^{2}} \right\} = \Omega = 1$.
Note that Shadowed-Rician (SR) fading can also be approximated by Nakagami-$m$ fading \cite{jia2023analytic}.
Changing this parameter $m$ makes it suitable for describing various channels and modeling signal propagation conditions from severe to moderate.

\subsection{Uplink Training}

In a CF-mMIMO system, propagation channels are considered to be piecewise constant during each coherence time period and frequency coherence interval. 
In particular, since only downlink transmission is considered herein, the proposed model is applicable for both time division duplexing (TDD) and frequency division duplexing (FDD) with channel estimation \cite{ngo2017cell}.
Denote a coherence block of length $\tau_\text{c}$, where the durations of $\tau_\text{p}$ and $\tau_\text{d}=\tau_\text{c}-\tau_\text{p}$ are dedicated to uplink training and downlink transmission, respectively. In the uplink training stage, let $\sqrt{\rho_\text{p}}\bm{\varphi}_{k} \in \mathbb{C}^{\tau_\text{p} \times 1}$ be the pilot sequence transmitted by the $k$-th UT, where $||\bm{\varphi}_{k}||^2 = \tau_\text{p}$, and $\rho_\text{p}$ represents the transmit power of the pilot symbols. 
Denote the set of all served UTs of the $l$-th SAP as $\Phi_{l}^\text{U}$, the received signal at the $l$-th SAP is
% can be given by
\begin{equation}
    \bm{y}_{\text{p},l} 
    = \sqrt{\tau_\text{p} \rho_\text{p}}\sum_{k\in \Phi_{l}^\text{U}}{\beta_{lk}}^{1/2}h_{lk}\bm{\varphi}_{k}^{H} + \bm{w}_{\text{p},l},
    \label{Formula:uplink_training}
\end{equation}
where $\bm{w}_{\text{p},l} \in \mathbb{C}^{1 \times \tau_p}$ is the receiver additive white Gaussian noise (AWGN) vector whose elements are i.i.d. as $\mathcal{CN}(0,\sigma^2)$.
The post-processing signal is obtained by projecting $\bm{y}_{\text{p},l}$ onto $\bm{\varphi}_{k}$, which is represented as
\begin{equation}
    \begin{aligned}
        \bar{y}_{\text{p},l,k} 
        &= \bm{y}_{\text{p},l}\bm{\varphi}_{k} \\
        &= \sqrt{\tau_\text{p} \rho_\text{p}}\sum_{k'\in \Phi_{l}^\text{U}}\beta_{lk'}^{1/2}h_{lk'}\bm{\varphi}_{k'}^{H}\bm{\varphi}_{k} + \bm{w}_{\text{p},l}\bm{\varphi}_k \\  
        &= \begin{aligned}[t] 
        &\sqrt{\tau_\text{p} \rho_\text{p}} \beta_{lk}^{1/2}h_{lk} \\
        &+ \sqrt{\tau_\text{p} \rho_\text{p}} \sum_{ \substack{ k'\in\Phi_{l}^\text{U} \\ k'\neq k} }\beta_{lk'}^{1/2}h_{lk'}\bm{\varphi}_{k'}^{H}\bm{\varphi}_{k} + \bm{w}_{\text{p},l}\bm{\varphi}_k. 
        \end{aligned}
    \end{aligned}
    \label{Formula:projection}
\end{equation}
The estimated channel $\hat{g}_{lk}$ using linear minimum mean-square-error (LMMSE) algorithm can then be written as
\begin{equation}
    \begin{aligned}
        \hat{g}_{lk} 
        &= \frac{\mathbb{E}\{\bar{y}_{\text{p},l,k}{g}_{lk}\}}{\mathbb{E} \{|\bar{y}_{\text{p},l,k}|^2\}} \bar{y}_{\text{p},l,k} \\
        &= \begin{aligned}[t]
            &\frac{\sqrt{\tau_\text{p}\rho_\text{p} }\beta_{lk}\mathbb{E}\{h_{lk}h_{lk}^{H}\}}{\tau_\text{p}\rho_\text{p}  \sum_{{k'}\in\Phi_{l}^\text{U}}\beta_{lk'}|h_{lk'}|^2 |\bm{\varphi}_{k'}^{H}\bm{\varphi}_{k}|^2 + \sigma^2} \\
            &\times \left( \sqrt{\tau_\text{p} \rho_\text{p} }\sum_{k'\in \Phi_{l}^\text{U}}\beta_{lk'}^{1/2}h_{lk'}\bm{\varphi}_{k'}^{H}\bm{\varphi}_{k} + \bm{w}_{\text{p},l}\bm{\varphi}_k \right)
        \end{aligned} \\
        &= \begin{aligned}[t] 
        &\beta_{lk} \mathbb{E} \{h_{lk}h_{lk}^{H}\} \\ &\times \begin{aligned}[t]
            & \left( \frac{\sum_{k'\in \Phi_{l}^\text{U}} \beta_{lk'}^{1/2}h_{lk'}\bm{\varphi}_{k'}^{H}\bm{\varphi}_{k}}{\sum_{{k'}\in\Phi_{l}^\text{U}}\beta_{lk'}|h_{lk'}|^2 |\bm{\varphi}_{k'}^{H}\bm{\varphi}_{k}|^2 + \frac{\sigma^2}{\tau_\text{p}\rho_\text{p} }} \right.\\ 
            &\left. + \frac{\sqrt{\tau_\text{p}\rho_\text{p} }\bm{w}_{\text{p},l}\bm{\varphi}_{k}}{\tau_\text{p}\rho_\text{p} \sum_{{k'}\in\Phi_{l}^\text{U}}\beta_{lk'}|h_{lk'}|^2 |\bm{\varphi}_{k'}^{H}\bm{\varphi}_{k}|^2 + \sigma^2} \right).
            \end{aligned}
        \end{aligned}
    \end{aligned}
    \label{Formula:estimated_channel}
\end{equation}

\subsection{Downlink Transmission}
During the downlink transmission stage, SAPs treat the channel estimates as the actual channels to perform conjugate beamforming, allowing them to simultaneously send data symbols to UTs within their respective service areas. 
Focusing on the studied network, we employ equal power allocation at the transmitting SAP for each served UT by using conjugate beamforming. 
Specifically, denoting the total transmit power of each SAP as $\rho_\text{d}$, and the number of UTs within the $l$-th SAP's service area as $\left|\Phi_{l}^\text{U}\right|$, we have $\rho_{lk}^\text{d}=\frac{\rho_\text{d}}{\left|\Phi_{l}^\text{U}\right|}$ represent the transmit power from $l$-th SAP to $k$-th UT.
Then, the transmitted symbols from the $l$-th SAP can be expressed as
\begin{equation}
    \begin{aligned}
    x_{l} 
    &= \sqrt{G_{\text{ml}}} \sum_{k\in \Phi_{l}^\text{U}} \sqrt{\rho_{lk}^\text{d}} \frac{\hat{g}_{lk}^{*}}{|g_{lk}|} q_k \\
    &=\sqrt{G_{\text{ml}}}\sum_{k\in \Phi_{l}^\text{U}} \sqrt{\rho_{lk}^\text{d}} \frac{{\beta_{lk}}^{1/2} \hat{h}_{lk}^{*}}{|{\beta_{lk}}^{1/2} h_{lk}|} q_k \\
    &=\sqrt{G_{\text{ml}}}\sum_{k\in \Phi_{l}^\text{U}} \sqrt{\rho_{lk}^\text{d}} \frac{\hat{h}_{lk}^{*}}{|h_{lk}|} q_k,      
    \end{aligned}   
    \label{Formula:transmitted_signal_from_SAP}
\end{equation}
where $q_k$ is the dummy or data symbol intended for the $k$-th UT with 
$\mathbb{E}\{ |q_k|^2 \} = 1$, and $\hat{h}_{lk}$ is the estimated channel of $h_{lk}$ using LMMSE. 
The signals received by the $k$-th UT is 
\begin{equation}
    \begin{aligned}
        r_k^\text{d} 
        &= \sum\limits_{l\in \Phi _{k}^{\text{Ser}}}{{{\beta }_{lk}}^{1/2}{{h}_{lk}}{{x}_{l}}} + \sum\limits_{l\in \Phi _{k}^{\text{Int}}}{\sqrt{{{\rho }_{\text{d}}}G_{\text{sl}}}{{\beta }_{lk}}^{1/2}{{h}_{lk}}{{q}_{k}}} + w_{\text{d},k} \\
        &= \begin{aligned}[t]
                &\underbrace{ \sum_{l\in\Phi_{k}^\text{Ser}} \sqrt{\rho_{lk}^\text{d} G_{\text{ml}}} {\beta_{lk}}^{1/2} h_{lk} \frac{\hat{h}_{lk}^{*}}{|h_{lk}|} q_k }_{\mathrm{DS}_k} \\
                &+\underbrace{ \sum_{l\in \Phi_{k}^\text{Ser}} \sum_{\substack{ k'\in\Phi_{l}^\text{U} \\ k'\neq k}} \sqrt{\rho_{lk}^\text{d} G_{\text{ml}}} {\beta_{lk}}^{1/2} h_{lk} \frac{\hat{h}_{lk'}^{*}}{|h_{lk'}|} q_{k'} }_{\mathrm{MUI}_k} \\
                &+\underbrace{ \sum\limits_{l\in \Phi _{k}^{\text{Int}}}{\sqrt{{{\rho }_{\text{d}}}G_{\text{sl}}}{{\beta }_{lk}}^{\frac{1}{2}}{{h}_{lk}}{{q}_{k}}} }_{\mathrm{ISI}_k}
                + w_{\text{d},k},
            \end{aligned}
    \end{aligned}
    \label{Formula:received_signal_at_UT}
\end{equation}
where $\Phi_{k}^\text{Ser}$ and $\Phi_{k}^\text{Int}$ are the set of service SAPs and interference SAPs, $\Phi_{l}^\text{U}$ is the set of served UTs within $l$-th SAP's service area, and $w_{\text{d},k} \sim \mathcal{CN}(0,\sigma^2)$ is the downlink AWGN. Moreover, $\mathrm{DS}_k$, $\mathrm{MUI}_k$, and $\mathrm{ISI}_k$ represent the desired signal (DS), the MUI, and the ISI, received at the $k$-th UT, respectively.

\begin{remark}
    Note that for a total of $T$ pilot sequences, $\sqrt{\rho_\text{p}}\bm{\varphi}_1, \sqrt{\rho_\text{p}}\bm{\varphi}_2,...,\sqrt{\rho_\text{p}}\bm{\varphi}_T$ are mutually orthogonal. In general, the number of pilots used for channel estimation can be limited and smaller than that of UTs, i.e., $T<K$, so that mutually non-orthogonal pilots have to be used in the network. 
    Each SAP hopes to receive different orthogonal pilots from UTs within its service area so that it can differentiate UTs for better communication accuracy and quality; 
    however, repeated pilots must be used when the number of UTs served is greater than that of the pilots. 
    Due to the existence of the second term in (\ref{Formula:projection}), which indicates that pilots transmitted from other UTs, the estimated channel $\hat{g}_{lk}$ in (\ref{Formula:estimated_channel}) will degrade the estimation performance, which is known as pilot contamination \cite{lu2014overview}. 
    \textcolor{black}{However, numerical observations in the following Fig. \ref{fig:Fig_pilotnbr} show that the estimation error becomes negligible with sufficient pilot resources, making perfect CSI at the transmitter (CSIT) a reasonable approximation for analytical tractability.}
\end{remark}

\begin{remark}
    In this paper, considering the global service coverage of CF-mMIMO SatCom networks and the transmission structure defined in 5G new radio (NR) standards \cite{3gpp.38.811}, we assume a larger number of pilot sequences and a longer coherence time interval than those typically used in terrestrial CF-mMIMO systems. This assumption is justified on the basis of the following considerations.
    \textcolor{black}{
    First, the coherence time in LEO satellite networks is predominantly influenced by the Doppler spread (i.e., maximum Doppler shift), which reflects the range of frequency shifts induced by satellite motion across different angular directions. 
    The Doppler spread can be estimated by $f_D = \frac{v_s}{\lambda} \text{cos}\theta$, where $v_s$ is the satellite-user relative velocity, $\lambda$ is the signal wavelength, $\theta=60^{\circ}$ is a typical elevation angle. 
    For example, using $v_s \approx 7.5$ km/s and a carrier frequency of $f_c = 2$ GHz (corresponding to $\lambda = \frac{c}{f_c} = 0.15$ m), the Doppler spread is approximately $f_D = \frac{v_s}{\lambda} \text{cos}\theta = 25$ kHz. Then, the coherence time is calculated as $T_c \approx \frac{1}{f_D} = 40$ \text{$\mu$s}. 
    However, using coarse Doppler frequency offset schemes, such as those in \cite{wang2020iterative}, the center of the Doppler power spectrum can be shifted close to zero. Specifically, with a root-mean-square-error (RMSE) of approximately $10^{-4}$, the coherence time can be offset to as long as $T_c \approx \frac{1}{f_D} = \frac{1}{25\text{kHz} \cdot 10^{4}} = 400$ ms. This is longer than an example coherence time of $100$ ms and sufficient to perform uplink training and downlink transmission operations.} 
    Second, based on the 5G NR-based transmission framework, an RB consists of $12$ subcarriers in the frequency domain and $14$ OFDM symbols in the time domain, with a time slot duration of $1$ ms for $15$ kHz subcarrier spacing. Then, the coherence time of $100$ ms will correspond to approximately $1400$ OFDM symbols.
    Thus, our choices of pilot training length $\tau_p = 200$ symbols and coherence block length $\tau_c = 500$ symbols are within the coherence time constraints imposed by Doppler characteristics. 
\end{remark}

In LEO satellite communications, Doppler shifts arise due to high-speed satellite motion. However, the LOS-dominant nature of satellite-user channels enables effective Doppler shift estimation and compensation using satellite ephemeris and terminal tracking. While the channel exhibits LOS characteristics, small-scale fading, e.g., Nakagami-$m$ fading in this paper, is included in the analysis to statistically characterize residual multi-path effects. This modeling approach is consistent with standard assumptions in \cite{park2022tractable,deng2021ultra,jia2022uplink,jia2023analytic} and 3GPP guidelines \cite{3gpp.38.811}.

\subsection{Operation Mechanism}
In this subsection, we provide technical explanations for the possible considerations of the proposed model from an implementation perspective.

\subsubsection{Delay and Phase Misalignment}
The distances from multiple satellites to a user are different. When they cooperatively transmit signals to the user, there will be problems of delay and phase misalignment for the received signals. 
However, the synchronization time can be accurate to the nanosecond level.
With the distances calculated by the locations of both satellites and users, and the estimated electromagnetic propagation velocity, such differences can be compensated for delays and phase misalignment \cite{yang2018progress,chunhao2013time}. 
In addition, online satellite clock synchronization using global positioning system (GPS) signals or time synchronization protocols can also help mitigate such problems \cite{zhou2011orbit}. Distributed synchronization protocols can help ensure that delay and phase alignment is maintained across multiple satellites with minimal error \cite{wang2022distributed}.

\subsubsection{Payload Type and Functional Split}
In this model, regenerative payloads are considered for satellites since functionalities such as on-board signal processing are required. This will reduce dependency on ground infrastructure, but may introduce extra latency which can be managed through synchronization protocols.
In terms of the functional split, the lower NR layers are performed on the service satellite to conduct signal processing and enable more direct communications with the UTs. The higher NR layers are performed by the CS, which has been mentioned earlier, for scheduling and routing of service satellites, etc. The tradeoffs from allocation of these functions require future investigation.

\subsubsection{Power Allocation}
Uniform power allocation is considered for its simplicity and to provide an initial evaluation of system performance. 
The corresponding results in our paper can provide a performance lower bound.
Other adaptive power allocation algorithms \cite{guidotti2024federated}, with different purposes of energy efficiency, per-user data rate, or system capacity, would better exploit space-ground channel variations and improve the system performance of LEO satellite mega-constellations.

\section{Performance Analysis for Downlink Transmissions}
In this section, we first present a series of statistical properties as ground truth expressions.
Subsequently, the complementary cumulative distribution function (CCDF) of the DSS, the average MUI, and the average ISI are provided. Finally, approximate expressions for the coverage probability are derived.

\captionsetup{font={scriptsize}}
\begin{figure}[t]
\begin{center}
\setlength{\abovecaptionskip}{+0.2cm}
\setlength{\belowcaptionskip}{-0.2cm}
\centering
  \includegraphics[width=3.2in, height=1.9in]{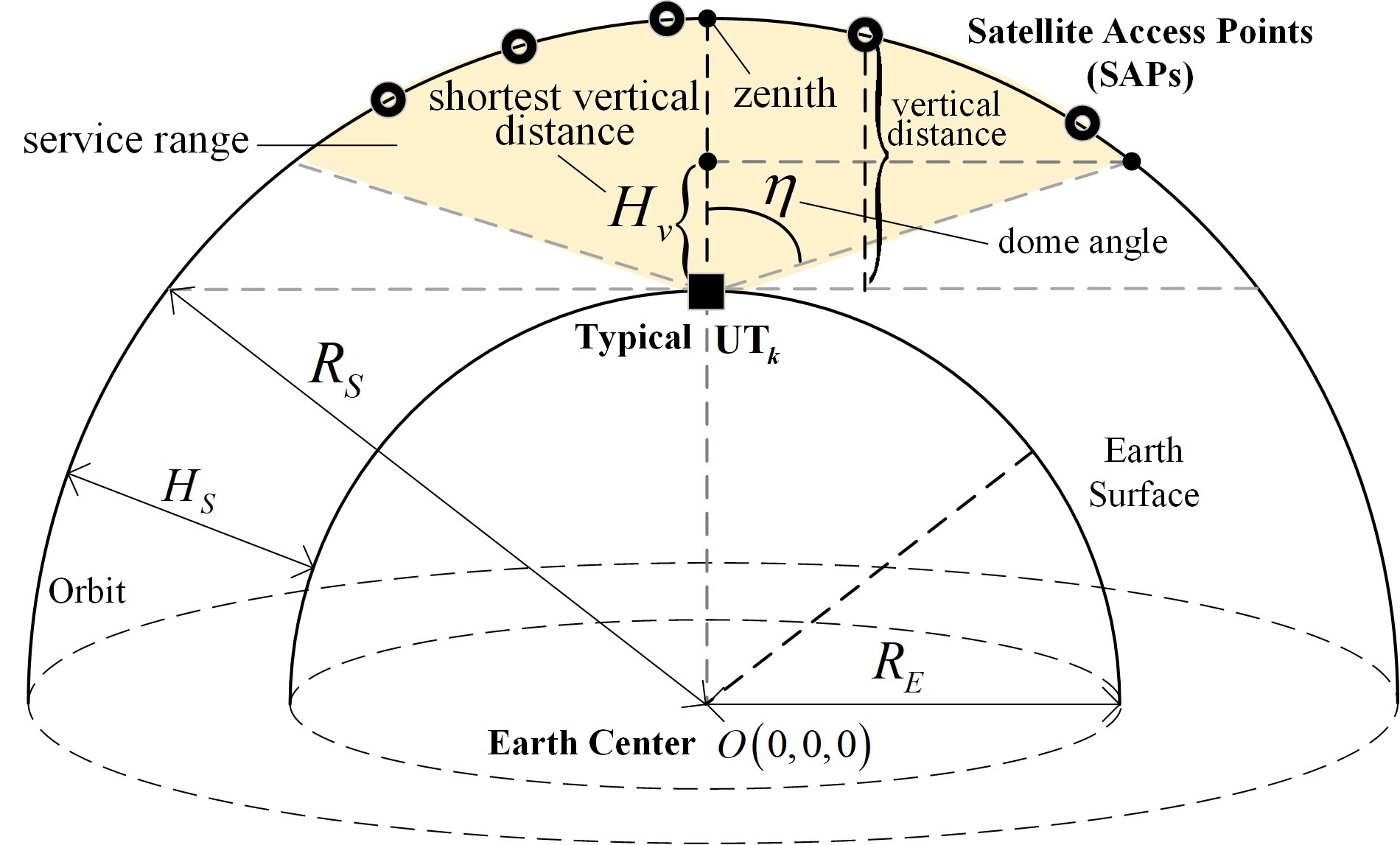}
\renewcommand\figurename{Fig.}
\caption{\scriptsize A sketch of stochastic geometry modeling in the system.}
\label{fig:Fig_sphere}
\end{center}
\end{figure}

\subsection{Statistical Properties}
Related parameters in the stochastic geometry sketch of UTs and SAPs are labeled in Fig. \ref{fig:Fig_sphere}. 
According to the 3GPP Technical Standard \cite{3gpp.38.821}, each ground UT has a maximum service range for satellites.
Thus, the typical UT has a dome angle $\eta$ from its position and corresponds to the shortest vertical distance $H_v$ from the UT. 
We denote the area marked in light yellow as a ``service range'' corresponding to $\eta$. If an SAP is within the service range, that is, if its vertical distance from the UT is longer than $H_v$, it is considered a service SAP. The CS coordinates a cluster of SAPs within the service range to serve the UT coherently.

Let the distance from the typical UT to any one of the service SAPs be $D$. 
According to \cite[Lemma 1]{li2024analytical}, the cumulative distribution function (CDF) of $D$ is given in (\ref{Formula:CDF_d}) \textcolor{black}{at the top of the page},
\begin{figure*}[ht]
\setlength{\abovecaptionskip}{-1.5cm}
\setlength{\belowcaptionskip}{-0.5cm}
\normalsize
\begin{equation}
\begin{aligned}
  F_{D}(d) = \begin{cases} 0, & 0<d<(R_\text{S}-R_\text{E}) \\ \frac{d^2-(R_\text{S}-R_\text{E})^2}{2 R_\text{E}(R_\text{S}-R_\text{E} \sin^2\eta - \sqrt{ {R_\text{S}}^2 -{R_\text{E}}^2\sin^2\eta } \cos\eta)}, & (R_\text{S}-R_\text{E})\leq d \leq \sqrt{ {R_\text{S}}^2 -{R_\text{E}}^2\sin^2\eta }-R_\text{E}\cos\eta \\1, & d>\sqrt{ {R_\text{S}}^2 -{R_\text{E}}^2\sin^2\eta }-R_\text{E}\cos\eta \end{cases}
    \label{Formula:CDF_d}
\end{aligned}
\end{equation}
\hrulefill
\end{figure*}
where $\eta \in [0,\frac{\pi}{2}]$, and the corresponding PDF is given by
\begin{equation}
\begin{aligned}
    & f_D(d) 
     = \frac{d}{R_\text{E}(R_\text{S} - R_\text{E} \sin^2\eta - \sqrt{ {R_\text{S}}^2 -{R_\text{E}}^2\sin^2\eta } \cos\eta)},
    \label{Formula:PDF_d}
\end{aligned}
\end{equation}
for $r_{\text{S,min}}\leq d \leq r_{\text{S,max}}$, while $f_D(d) = 0$, otherwise. Note that $r_{\text{S,min}} = {R_\text{S}} - {R_\text{E}}$ and $r_{\text{S,max}} = \sqrt {{R_\text{S}}^2 - {R_\text{E}}^2{{\sin }^2}\eta }  - {R_\text{E}}\cos \eta$ are the minimum and maximum distances from the typical UT to a service SAP, respectively, and ${{r}_{\max}}=\sqrt{R_\text{S}^{2}-R_\text{E}^{2}}$ is the maximum distance from the typical UT to any SAP \cite{park2022tractable}.

Next, the shortest vertical distance from service SAPs to the typical UT is given by \cite[Lemma 2]{li2024analytical}
\begin{equation}
    H_v = \begin{cases} \cos^2\eta \left( \sqrt{{R_\text{E}}^2 + \frac{{R_\text{S}}^2-{R_\text{E}}^2}{\cos^2\eta}}-R_\text{E} \right), & 0 \leq \eta < \frac{\pi}{2}, \\ 0, & \eta=\frac{\pi}{2}. \end{cases}
    \label{Formula:height}
\end{equation}
Then, the maximum distance from a service SAP to the typical UT is $\frac{H_v}{\cos\eta}$ for $0 \leq \eta < \frac{\pi}{2}$, and $\sqrt{{R_\text{S}}^2-{R_\text{E}}^2}$ for $\eta=\frac{\pi}{2}$.   

Finally, due to the change of $\eta$, each SAP has a service area on the earth's surface. The average number of UTs within this service area is written as \cite[Lemma 3]{li2024analytical}
\begin{equation}
    \left|\Phi_{l}^\text{U}\right|_\text{avg} = 2\pi R_\text{E}\lambda_\text{U} \begin{aligned}[t]
    &\left(R_\text{E}-\frac{{R_\text{E}}^2}{R_\text{S}} \sin^2\eta \right.\\
    &\left. - \frac{R_\text{E}\sqrt{{R_\text{S}}^2-{R_\text{E}}^2\sin^2\eta}\cos\eta}{R_\text{S}} \right).
    \end{aligned}
    \label{Formula:number_of_UTs}
\end{equation}

\subsection{Analytical Expressions}
The coverage probability is determined by the signal-to-interference-plus-noise ratio (SINR) of the message received by a typical UT and a threshold $\gamma_\text{th}$. When the SINR is above the threshold $\gamma_\text{th}$, this UT can successfully decode the data. 
Following the expression for the signals received by the $k$-th UT in (\ref{Formula:received_signal_at_UT}), the corresponding SINR is written as
\begin{equation}
    \mathrm{SINR}_k 
    = \frac{{{\left| \mathrm{DS}_k \right|}^{2}}} {\sum\limits_{k'\ne k} {\left| \mathrm{UI}_{kk'} \right|}^{2} + {\left| \mathrm{ISI}_{k} \right|}^{2} + {{\sigma }^{2}}},
\end{equation}
where 
\begin{equation} \nonumber
    \mathrm{DS}_k = \sum\limits_{l\in \Phi _{k}^{\text{Ser}}}{\sqrt{\rho _{lk}^\text{d}G_{\text{ml}}}{{\beta }_{lk}}^{\frac{1}{2}}{{h}_{lk}}\frac{\hat{h}_{lk}^{*}}{\left| {{\hat{h}}_{lk}} \right|}} q_k,
\end{equation}
\begin{equation} \nonumber
    \mathrm{UI}_{kk'} = \sum\limits_{l\in \Phi _{k}^{\text{Ser}}}{\sqrt{\rho _{lk'}^\text{d}G_{\text{ml}}}{{\beta }_{lk}}^{\frac{1}{2}}{{h}_{lk}}\frac{\hat{h}_{lk'}^{*}}{\left| {{\hat{h}}_{lk'}} \right|}} q_{k'},
\end{equation}
\begin{equation} \nonumber
    \mathrm{ISI}_{k} = \sum\limits_{l\in \Phi _{k}^{\text{Int}}}{\sqrt{{{\rho }_{d}}G_{\text{sl}}}{{\beta }_{lk}}^{\frac{1}{2}}{{h}_{lk}}} q_k,
\end{equation}
represent the strength of the DS, the interference caused by the $k'$-th user (UI), and the ISI, respectively.
By inserting $\rho_{lk}^\text{d}=\frac{\rho_\text{d}}{\left|\Phi_{l}^\text{U}\right|}$ for uniform power allocation, the coverage probability of the typical UT is represented as
\begin{equation}
\begin{aligned}
    &\mathbb{P}_{\text{cov}}\left(\gamma_\text{th};
    \lambda_\text{S},\lambda_\text{U},R_\text{S},\tau_\text{p},\rho_\text{d},m\right) \\
    &= \mathbb{P}\left\{\mathrm{SINR}_k > \gamma_\text{th} \right\} \\
    &= \mathbb{P}\left\{ {{\left| \sum\limits_{l\in \Phi _{k}^\text{Ser}}{\frac{ {{\beta }_{lk}}^{\frac{1}{2}}{{h}_{lk}}\frac{\hat{h}_{lk}^{*}}{\left| {{\hat{h}}_{lk}} \right|} }{\sqrt{\left| \Phi _{l}^\text{U} \right|}} } \right|}^{2}}\ge {{\gamma }_\text{th}} \left( I_{k}^{\text{Ser}}+I_{k}^{\text{Int}} \right) +\frac{{{\gamma }_\text{th}}{{\sigma }^{2}}}{{{\rho }_\text{d}}G_{\text{ml}}} \right\}\\
    &\leq \mathbb{P}\left\{ {{\left| \sum\limits_{l\in \Phi _{k}^\text{Ser}}{\frac{ {{\beta }_{lk}}^{\frac{1}{2}}{{h}_{lk}}\frac{h_{lk}^{*}}{\left| {{h}_{lk}} \right|} }{\sqrt{\left| \Phi _{l}^\text{U} \right|}} } \right|}^{2}} \ge {{\gamma }_\text{th}} \left( I_{k}^{\text{Ser}}+I_{k}^{\text{Int}} \right) +\frac{{{\gamma }_\text{th}}{{\sigma }^{2}}}{{{\rho }_\text{d}}G_{\text{ml}}} \right\} \\
    &= \mathbb{P}\left\{ \left| \sum\limits_{l\in \Phi _{k}^\text{Ser}}{\frac{ {{\beta }_{lk}}^{\frac{1}{2}}{{h}_{lk}}\frac{h_{lk}^{*}}{\left| {{h}_{lk}} \right|} }{\sqrt{\left| \Phi _{l}^\text{U} \right|}} } \right|\ge \sqrt{{{\gamma }_\text{th}} \left( I_{k}^{\text{Ser}}+I_{k}^{\text{Int}} \right) +\frac{{{\gamma }_\text{th}}{{\sigma }^{2}}}{{{\rho }_\text{d}}G_{\text{ml}}}} \right\},
\end{aligned}
\label{Formula_Coverage_Probability_def}
\end{equation}
where ${{I}_{k}^\text{Ser}}=\sum\limits_{k'\ne k}{{{\left| \sum\limits_{l\in \Phi _{k}^\text{Ser}}{\frac{1}{\sqrt{\left| \Phi _{l}^\text{U} \right|}}{{\beta }_{lk}}^{\frac{1}{2}}{{h}_{lk}}\frac{\hat{h}_{lk'}^{*}}{\left| {{\hat{h}}_{lk'}} \right|}} \right|}^{2}}}$, 
$I_{k}^{\text{Int}}= \frac{G_\text{sl}}{G_\text{ml}} {{\left| \sum\limits_{l\in \Phi _{k}^{\text{Int}}}{{{\beta }_{lk}}^{\frac{1}{2}}{{h}_{lk}}} \right|}^{2}}$.
We consider the perfect channel estimation in deriving an upper bound of the DSS.
To derive the coverage probability, we need to compute both the distribution of DSS $S_k$ and the average interference power.

\textcolor{black}{
Pilot-based channel estimation is included to reflect the complete cell-free massive MIMO framework. However, for analytical tractability, we assume perfect CSIT in subsequent derivations. 
This assumption is supported by numerical observations in the following Fig. \ref{fig:Fig_pilotnbr} showing that, with sufficient pilot resources, the estimation error becomes negligible, which makes perfect CSIT a reasonable assumption.
}

\captionsetup{font={scriptsize}}
\begin{figure}[tp]
\begin{center}
\setlength{\abovecaptionskip}{+0.4cm}
\setlength{\belowcaptionskip}{-0.4cm}
\centering
  \includegraphics[width=3.0in, height=1.8in]{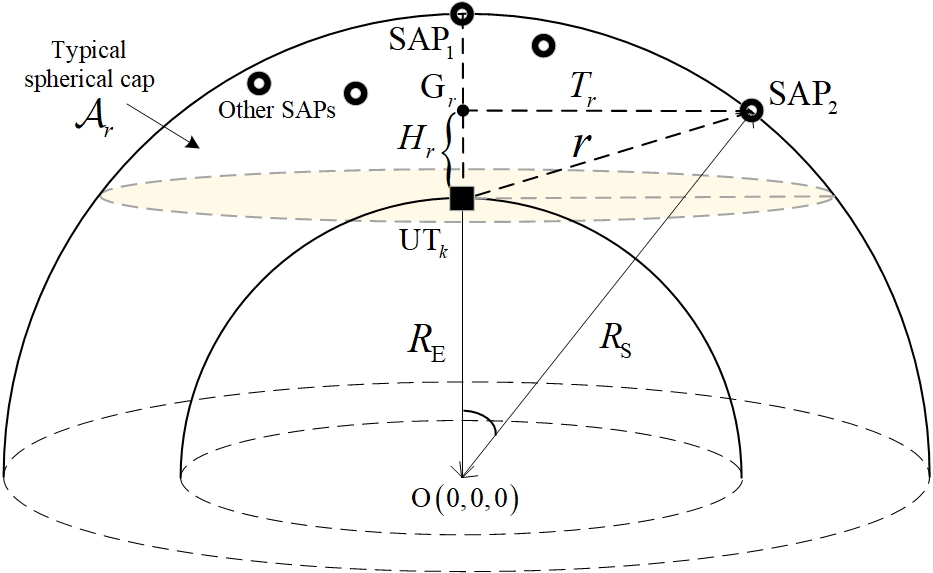}
\renewcommand\figurename{Fig.}
\caption{\scriptsize Statistical Properties for distance relationships.}
\label{fig:Spherical_cap}
\end{center}
\vspace{-5mm}
\end{figure}

\begin{proposition} 
Based on the perfect CSIT, the CCDF of the DSS received by the typical UT is given in (\ref{Formula:CCDF_DS}) \textcolor{black}{at the top of the next page}, where $s=\frac{A+i2\pi c}{2\sqrt{{{\left| \Phi _{l}^{\rm{U}} \right|}_{\rm{avg}}}} x}$.
\begin{figure*}[ht]
\setlength{\abovecaptionskip}{-1.5cm}
\setlength{\belowcaptionskip}{-0.5cm}
\normalsize
\begin{equation}
\begin{aligned}
  &\mathbb{P} \left\{ S_k \ge x \right\} \approx 1 - \frac{2^{-B}\rm{exp}\left( \frac{A}{2}\right)}{\sqrt{{{\left| \Phi _{l}^\text{U} \right|}_{\text{avg}}}} x} \sum_{b=0}^{B}\left( \tbinom{B}{b}\sum_{c=0}^{C+b}\frac{(-1)^b}{D_c} \rm{Re} \left[ \frac{1}{\textit{s}}\rm{exp}\left( -2\pi\lambda_S\frac{R_S}{R_E}\Theta(R_S,\eta,m) \right) \right]\right), {\;} {\rm{where}}\\ 
  & \Theta(R_\text{S},\eta,m) = \int_{r_{\text{S},\text{min}}}^{r_{\text{S},\text{max}}} \left\{ 1 - \frac{\Gamma\left(2m\right)}{\Gamma\left(m\right)2^{2m-1}} \left[ \frac{\sqrt{\pi}}{\Gamma(\frac{1}{2}+m)} \Phi\left( m,\frac{1}{2};\frac{s^2 r^{-\alpha}}{4m} \right) - \frac{s\sqrt{\frac{r^{-\alpha}\pi}{\Gamma(m)}}}{\Gamma(m)} \Phi\left(\frac{1+2m}{2},\frac{3}{2};\frac{s^2 r^{-\alpha}}{4m}\right) \right] \right\} dr. \\
\end{aligned}
\label{Formula:CCDF_DS}
\end{equation}
\hrulefill
\end{figure*}
\end{proposition}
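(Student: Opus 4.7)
The plan is to reduce the desired-signal magnitude to a real-valued sum over the service-SAP SPPP, compute its Laplace transform through the PGFL of the PPP on the spherical cap, evaluate the Nakagami-$m$ MGF in closed form via confluent hypergeometric series, and recover the CCDF by the Abate--Whitt Euler numerical inversion. First, under perfect CSIT one has $\hat h_{lk}=h_{lk}$, so each summand $h_{lk}\hat h_{lk}^{*}/|\hat h_{lk}|$ collapses to $|h_{lk}|$. This makes every contribution nonnegative, and the desired signal strength becomes $S_k=Y^{2}$ with $Y=\sum_{l\in\Phi_k^{\text{Ser}}}\beta_{lk}^{1/2}|h_{lk}|/\sqrt{|\Phi_l^{\text{U}}|}$, so that $\mathbb{P}\{S_k\ge x\}=\mathbb{P}\{Y\ge\sqrt{x}\}$.

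Next, I would decouple the in-cell load from the PPP geometry by replacing the random cardinality $|\Phi_l^{\text{U}}|$ with its mean $|\Phi_l^{\text{U}}|_{\text{avg}}$ from (\ref{Formula:number_of_UTs}), leaving an additive functional of the service SPPP. Conditioning on the PPP and invoking the Laplace functional of a homogeneous SPPP restricted to the spherical cap parameterized by the dome angle $\eta$, then changing variables from the surface element to the radial distance through the PDF $f_D$ in (\ref{Formula:PDF_d}), the Laplace transform $\mathcal{L}_Y(s)$ reduces to $\exp\bigl(-2\pi\lambda_{\text{S}}(R_{\text{S}}/R_{\text{E}})\,\Theta(R_{\text{S}},\eta,m)\bigr)$, the Jacobian contributing the factor $2\pi\lambda_{\text{S}}R_{\text{S}}/R_{\text{E}}$ seen in the statement and the integrand being $1-\mathbb{E}_{|h|}\bigl[\exp(-s\,\beta_0^{1/2}r^{-\alpha/2}|h|/\sqrt{|\Phi_l^{\text{U}}|_{\text{avg}}})\bigr]$ over $r\in[r_{\text{S,min}},r_{\text{S,max}}]$.

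The crucial computation is the Nakagami-$m$ moment generating function $\mathbb{E}[\exp(-c|h|)]$. Using (\ref{Formula:Nakagami-m}), expanding $e^{-cx}$ as a Maclaurin series, interchanging sum and integral (justified by dominated convergence, since $\Omega=1$ gives a finite MGF on the positive real axis for any bounded $c$ on the Bromwich contour), and evaluating $\int_{0}^{\infty}x^{2m+n-1}e^{-mx^{2}}\,dx=\tfrac{1}{2}m^{-m-n/2}\Gamma(m+n/2)$, I would separate even and odd powers of $c$ and recognize the two resulting Pochhammer series as Kummer's confluent hypergeometric functions $\Phi(m,\tfrac{1}{2};c^{2}/(4m))$ and $\Phi(m+\tfrac{1}{2},\tfrac{3}{2};c^{2}/(4m))$. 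Applying Legendre's duplication formula $\Gamma(2m)=2^{2m-1}\Gamma(m)\Gamma(m+\tfrac{1}{2})/\sqrt{\pi}$ to the resulting prefactors rewrites them as $\Gamma(2m)/[\Gamma(m)2^{2m-1}]$, recovering exactly the bracketed expression inside $\Theta(R_{\text{S}},\eta,m)$.

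Finally, I would invoke the Abate--Whitt Euler algorithm to numerically invert $\mathcal{L}_Y$ at the evaluation point indicated by $\mathbb{P}\{Y\ge\sqrt{x}\}$, with $D_c=2$ if $c=0$ and $D_c=1$ otherwise, while $A$, $B$, and $C$ control the discretization error, Euler acceleration, and series truncation respectively; this produces the alternating double sum over $b$ and $c$ in the statement. The main obstacle is the Step-3 coefficient bookkeeping, because the Nakagami expectation splits into two distinct confluent-hypergeometric pieces whose gamma-function prefactors must be collapsed through the duplication identity into the compact form displayed in $\Theta(R_{\text{S}},\eta,m)$. A secondary difficulty is justifying the mean-field replacement $|\Phi_l^{\text{U}}|\mapsto|\Phi_l^{\text{U}}|_{\text{avg}}$ (which decouples an otherwise intractable joint dependence between the SAP locations and their per-cell user loads) together with bounding the Euler-inversion truncation error; both approximations account for the "$\approx$" symbol in the final expression.
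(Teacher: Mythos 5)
Your proposal follows essentially the same route as the paper: collapse each summand to $\beta_{lk}^{1/2}|h_{lk}|$ under perfect CSIT, replace the random per-SAP load $|\Phi_l^{\text{U}}|$ by its mean, apply the PGFL of the spherical PPP restricted to the service cap, convert the surface integral to a radial one via the cap-area Jacobian $2\pi\lambda_{\text{S}}(R_{\text{S}}/R_{\text{E}})r\,dr$, and recover the distribution by Euler-type numerical Laplace inversion with the $(A,B,C,D_c)$ coefficients. The one sub-step where you genuinely diverge is the Nakagami-$m$ MGF: the paper evaluates $\int_0^\infty x^{2m-1}e^{-mx^2-c x}dx$ directly from a Gradshteyn--Ryzhik table entry as a parabolic cylinder function $D_{-2m}$ and then converts it to the two Kummer $\Phi$ functions, whereas you propose a Maclaurin expansion of $e^{-cx}$, term-by-term Gaussian moments, even/odd separation into Pochhammer series, and Legendre duplication. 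Both yield the identical bracketed expression in $\Theta$; your route is more self-contained but requires the convergence/interchange justification you correctly flag, while the paper's is a two-line table lookup.

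One point to fix: you set $S_k=Y^2$ and invert at $\sqrt{x}$, but in the paper $S_k$ \emph{is} the amplitude $Y=\sum_{l}\beta_{lk}^{1/2}|h_{lk}|/\sqrt{|\Phi_l^{\text{U}}|}$ itself (the squaring is already absorbed into the coverage-probability derivation, where the threshold is square-rooted before Proposition~1 is invoked). This is visible in the stated $s=\frac{A+i2\pi c}{2\sqrt{|\Phi_l^{\text{U}}|_{\text{avg}}}\,x}$, which scales as $1/x$ rather than $1/\sqrt{x}$; carrying your convention through would produce a formula evaluated at the wrong point. It is a one-line correction, not a flaw in the method.
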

\begin{proof}

First, DSS $S_k$ can be calculated as follows.
\begin{equation}
\begin{aligned}
    S_k 
    & = \sum\limits_{l\in \Phi _{k}^\text{Ser}}{\frac{1}{\sqrt{\left| \Phi _{l}^\text{U} \right|}}{{\beta }_{lk}}^{\frac{1}{2}}{{h}_{lk}}\frac{h_{lk}^{*}}{\left| {{h}_{lk}} \right|}} \\
    &\approx \frac{1}{\sqrt{{{\left| \Phi _{l}^\text{U} \right|}_{\text{avg}}}}}\sum\limits_{l\in \Phi _{k}^\text{Ser}}{{{\beta }_{lk}}^{\frac{1}{2}}{{h}_{lk}}\frac{h_{lk}^{*}}{\left| {{h}_{lk}} \right|}} \\
    &\approx \frac{1}{\sqrt{{{\left| \Phi _{l}^\text{U} \right|}_{\text{avg}}}}}\sum\limits_{l\in \Phi _{k}^\text{Ser}}{{{\beta }_{lk}}^{\frac{1}{2}}\left| {{h}_{lk}} \right|}.
\end{aligned}
\label{Formula:S_k}
\end{equation}
Denote ${{\hat{S}}_{k}}=\sum\limits_{l\in \Phi _{k}^\text{Ser}}{{{\beta }_{lk}}^{\frac{1}{2}}\left| {{h}_{lk}} \right|}$.
The PDF of ${\hat{S}}_{k}$ is obtained from its Laplace transform, which is
\begin{equation}
\begin{aligned}
    \mathcal{L}_{{\hat{S}}_{k}}(s) 
    & = \mathbb{E}\left\{ e^{-s {{\hat{S}}_{k}}} \right\} \\
    & = \int_{0}^{\infty}e^{-st}f_{{\hat{S}}_{k}}(t)dt,
\end{aligned}
\end{equation}
where $f_{{\hat{S}}_{k}} (t) = \mathcal{L}^{-1}\left\{ \mathcal{L}_{{\hat{S}}_{k}}(s) \right\}$,
and we also have
\begin{equation}
    \begin{aligned}
       \mathcal{L}_{{\hat{S}}_{k}}(s) 
       &= \mathbb{E}_{h} \left\{\mathrm{exp}(-s {{\hat{S}}_{k}})\right\} \\
       &= \mathbb{E}_{h} \left\{\prod_{l\in\Phi_{k}^\text{Ser}}\mathbb{E}_h \left\{ \mathrm{exp} \left( -s{\beta_{lk}}^{1/2} \left|h_{lk}\right| \right) \right\} \right\},
    \end{aligned}
    \label{Formula:LT_DS_power_1}
\end{equation}
where $\mathbb{E}_h \left\{ \mathrm{exp} \left( -s{\beta_{lk}}^{1/2} \left|h_{lk}\right| \right) \right\}$ needs to be first calculated.

As $\left| h_{lk} \right| \sim \mathrm{Nakagami}(m,\Omega)$, 
whose PDF is $f_{|h_{lk}|}(x) = \frac{2m^m}{\Gamma(m)}x^{2m-1}e^{-mx^2}$, the above expectation can be derived as 
\begin{align}
    &\mathbb{E}_h \left\{ \mathrm{exp} \left( -s{\beta_{lk}}^{1/2} \left|h_{lk}\right| \right) \right\} \nonumber \\
    &= \int_{0}^{\infty}e^{-s{\beta_{lk}}^{1/2}x} f_{|h_{lk}|}(x) dx \nonumber \\
    &= \int_{0}^{\infty}e^{-s{\beta_{lk}}^{1/2}x}\frac{2m^m}{\Gamma(m)}x^{2m-1}\mathrm{exp}(-mx^2)dx \nonumber \\
    &= \frac{2m^m}{\Gamma(m)} \int_{0}^{\infty}x^{2m-1}e^{-mx^2-s{\beta_{lk}}^2 x}dx \nonumber \\
    &\stackrel{(a)}{=} \frac{\Gamma(2m)}{\Gamma(m)2^{m-1}} e^{\frac{s^2\beta_{lk}}{8m}} D_{-2m}\left(s\sqrt{\frac{\beta_{lk}}{2m}}\right) \nonumber \\
    &\stackrel{(b)}{=} \frac{\Gamma(2m)}{\Gamma(m)2^{2m-1}} \begin{aligned}[t]
    &\left\{ \frac{\sqrt{\pi}}{\Gamma \left(\frac{1}{2}+m\right)} \Phi\left(m,\frac{1}{2};\frac{s^2 \beta_{lk}}{4m}\right) \right.\\
    &\left. - \frac{s\sqrt{\frac{\beta_{lk}\pi}{m}}}{\Gamma(m)}\Phi\left(\frac{1}{2}+m,\frac{3}{2};\frac{s^2\beta_{lk}}{4m}\right) \right\},
    \end{aligned}
\label{Formula:LT_DS_power_E}
\end{align}
where $(a)$ follows from a combinations of exponentials of more complicated arguments and powers in \cite[Equation 3.462.1]{zwillinger2007table}, $D_{-2m}\left(s\sqrt{\frac{\beta_{lk}}{2m}}\right)$ is written as (\ref{Formula:D_-2m}) 
\textcolor{black}{at the top of the next page, }
and $(b)$ follows from the parabolic cylinder function in \cite[Equation 9.240.1]{zwillinger2007table}.
\begin{figure*}[ht!]
\setlength{\abovecaptionskip}{-1.5cm}
\setlength{\belowcaptionskip}{-0.5cm}
\normalsize
\vspace{-3mm}
\begin{equation}
\begin{split}
    \begin{aligned}
        {{D}_{-2m}}\left( s\sqrt{\frac{{{\beta }_{lk}}}{2m}} \right)
        = {{2}^{-m}}\exp \left( -\frac{{{s}^{2}}{{\beta }_{lk}}}{8m} \right)\left\{ \frac{\sqrt{\pi }}{\Gamma \left( \frac{1+2m}{2} \right)}\Phi \left( m,\frac{1}{2};\frac{{{s}^{2}}{{\beta }_{lk}}}{4m} \right)-\frac{s\sqrt{\frac{{{\beta }_{lk}}\pi }{m}}}{\Gamma \left( m \right)}\Phi \left( \frac{1+2m}{2},\frac{3}{2};\frac{{{s}^{2}}{{\beta }_{lk}}}{4m} \right) \right\}.
    \end{aligned}
    \label{Formula:D_-2m}
\end{split}
\end{equation}
\hrulefill
\end{figure*}

To illustrate the relationship between the satellite-terrestrial distance and other parameters on two spheres, Fig. \ref{fig:Spherical_cap} is provided, where the previous distance symbol $H_{v}$ is re-represented by $H_r$ based on the SAP-to-UT distance $r$.
The area of the typical spherical cap is $\left| \mathcal{A} \right| = 2\pi(R_\text{S}-R_\text{E})R_\text{S}$.

Let the distance from a randomly-selected SAP on the sphere $S$ to the typical UT be $r$. In Triangle $2$ by SAP$_2$, point $\text{G}_r$ and original point $\text{O}$, the height of $H_r$ can be derived by using Pythagorean theorem ${R_\text{S}}^2 = (R_\text{E} + H_r)^2+{T_r}^2$, 
and in Triangle $1$ formed by SAP$_2$, point $\text{G}_r$ and UT$_k$, distance $r$ can be written as $r^2 = {T_r} ^2 + {H_r}^2$.
By combining the above two formulas, the expression for $H_r$ on $r$ is given by
$H_r = \frac{{R_\text{S}}^2-{R_\text{E}}^2-r^2}{2 R_\text{E}}$.
Then, the typical spherical cap can be further represented as 
\begin{equation}
    \begin{aligned}
        \left| \mathcal{A}_r \right| &= 2\pi(R_\text{S}-R_\text{E}-H_r)R_\text{S} \\
        &= 2\pi\left[ R_\text{S}-R_\text{E}-\frac{\left( {R_\text{S}}^2 - {R_\text{E}}^2 \right)-r^2}{2R_\text{E}} \right] R_\text{S}.
    \end{aligned}
\end{equation}
The derivation of $\mathcal{A}_r$ on $r$ is written as
\begin{equation}
    \frac{\partial \left| \mathcal{A}_r \right|}{\partial r} = 2\frac{R_\text{S}}{R_\text{E}}\pi r.
    \label{Formula:derivation_A}
\end{equation}
Then, by inserting (\ref{Formula:LT_DS_power_E}) back into (\ref{Formula:LT_DS_power_1}), the Laplace transform of the desired signal power can be further written as
\begin{equation}
    \begin{aligned}
        & \mathcal{L}_{{\hat{S}}_{k}}(s) \\
        &= \mathbb{E}_{h}\left\{\prod_{l\in\Phi_{k}^\text{Ser}}\frac{\Gamma(2m)}{\Gamma(m)2^{2m-1}} e^{\frac{s^2\beta_{lk}}{8m}} D_{-2m}\left(s\sqrt{\frac{\beta_{lk}}{2m}}\right) \right\} \\
        &\stackrel{(a)}{=} \mathrm{exp} \left[ -\lambda_S\int_{v  \in\mathcal{A}_r} 1- 
        \frac{\Gamma(2m)}{\Gamma(m)2^{m-1}}e^{\frac{s^2 v^{-\alpha}}{8m}} \right.\\ 
        & \qquad \qquad \qquad \qquad \qquad \qquad \left.\times D_{-2m}\left(s\sqrt{\frac{\beta_{lk}}{2m}}\right)d v \right],
    \end{aligned}
    \label{Formula:LT_DS_power_2}
\end{equation}
where $(a)$ follows from the probability generating functional (PGFL) of the PPP.

By inserting (\ref{Formula:derivation_A}) into (\ref{Formula:LT_DS_power_2}), we can further derive $\mathcal{L}_{{\hat{S}}_{k}}(s)$ as in (\ref{Formula:LT_DS_power_3}), which is \textcolor{black}{at the top of the next page}.
\begin{figure*}[ht]
\setlength{\abovecaptionskip}{-1.5cm}
\setlength{\belowcaptionskip}{-0.5cm}
\normalsize
\vspace{-4mm}
\begin{equation}
    \begin{aligned}
      \mathcal{L}_{{\hat{S}}_{k}}(s) 
      &= \mathrm{exp} \left\{ -2\pi\lambda_\text{S} \frac{R_\text{S}}{R_\text{E}} \int_{r_{\text{S},\rm{min}}}^{r_{\text{S},\rm{max}}} \left[ 1 - \frac{\Gamma(2m)}{\Gamma(m)2^{m-1}} e^{\frac{s^2 r^{-\alpha}}{8m}} D_{-2m}\left(s\sqrt{\frac{\beta_{lk}}{2m}}\right) \right]rdr \right\} \\
      & = {\rm{exp}}\left\{ { - 2\pi {\lambda_\text{S}}\frac{{{R_\text{S}}}}{{{R_\text{E}}}}\int_{{r_{\text{S},{\rm{min}}}}}^{{r_{\text{S},{\rm{max}}}}} r } \right.\\
      & {\quad\quad\quad\quad\quad} \left. { \times \left[ {1 - \frac{{\Gamma \left( {2m} \right)}}{{\Gamma \left( m \right){2^{2m - 1}}}}\left( {\frac{{\sqrt \pi  }}{{\Gamma \left( {\frac{1}{2} + m} \right)}}\Phi \left( {m,\frac{1}{2};\frac{{{s^2}{\beta _{lk}}}}{{4m}}} \right) - \frac{{s\sqrt {\frac{{{\beta _{lk}}\pi }}{m}} }}{{\Gamma \left( m \right)}}\Phi \left( {\frac{1}{2} + m,\frac{3}{2};\frac{{{s^2}{\beta _{lk}}}}{{4m}}} \right)} \right)} \right]dr} \right\}.
    \end{aligned}
    \label{Formula:LT_DS_power_3}
\end{equation}
\hrulefill
\begin{equation}
\begin{aligned}
  F_{{\hat{S}}_{k}}(x) 
  &\approx \frac{2^{-B}\mathrm{exp}\left(\frac{A}{2}\right)}{x} \sum_{b=0}^{B} \left( \dbinom{B}{b} \sum_{c=0}^{C+b}\frac{(-1)^{c}}{D_c} \mathrm{Re}\left[ \frac{\mathcal{L}_{{\hat{S}}_{k}}(s)}{s} \right] \right) \\
  &= \frac{2^{-B}\mathrm{exp}\left(\frac{A}{2}\right)}{x} \sum_{b=0}^{B} \left( \dbinom{B}{b} \sum_{c=0}^{C+b}\frac{(-1)^{c}}{D_c} \mathrm{Re}\left[ \frac{1}{s} \mathrm{exp} \left( -2\pi\lambda_\text{S} \frac{R_\text{S}}{R_\text{E}}\Theta(R_\text{S},\eta,m) \right) \right] \right).
\end{aligned}
    \label{Formula:F_S_k_2}
\end{equation}
\hrulefill
\vspace{-4mm}
\end{figure*}
To derive the CDF of ${{\hat{S}}_{k}}$, we have an integral over its PDF and adopt the inverse Laplace transform method for the PDF \cite{shang2017enabling}, where it can be calculated by Bromwich contour integral method.
Moreover, it can also be estimated using summation forms of integrals with a controllable error \cite{cohen2007numerical}. 
We derive the CDF of ${{\hat{S}}_{k}}$, as
\begin{equation}
\begin{aligned}
    F_{{\hat{S}}_{k}} (x) 
    &= \mathbb{P} \left\{ {{\hat{S}}_{k}} \leq x \right\} \\
    &= \int_{0}^{x}f_{{\hat{S}}_{k}}(t)dt \\
    &= \int_{0}^{x} \mathcal{L}^{-1}\left\{ \mathcal{L}_{{\hat{S}}_{k}} (s) \right\} dt \\
    &\stackrel{(a)}{=} \int_{0}^{x} \frac{1}{2\pi i} \lim_{T\to\infty} \int_{c-iT}^{c+iT} e^{sT} \mathcal{L}_{F_{{\hat{S}}_{k}}}(s) dsdt \\
    &\stackrel{(b)}{=} \frac{1}{2\pi i} \lim_{T\to\infty} \int_{c-iT}^{c+iT} e^{sx} \frac{\mathcal{L}_{{\hat{S}}_{k}}(s)}{s} ds,
\end{aligned}
\label{Formula:F_S_k_1}
\end{equation}
where $(a)$ holds true according to the definition of inverse Laplace transform, and $(b)$ is obtained based on $\mathcal{L}_{F_{{\hat{S}}_{k}}}=\frac{\mathcal{L}_{{\hat{S}}_{k}}}{s}$ in probability theory.
Following \cite{zhou2017power}, (\ref{Formula:F_S_k_1}) can be approximated by a finite sum which is expressed as in (\ref{Formula:F_S_k_2}) \textcolor{black}{at the top of the page},
where $s=\frac{A+i2\pi c}{2x}$, and $\mathcal{L}_{{\hat{S}}_{k}}(s)$ is obtained in (\ref{Formula:LT_DS_power_3}). $A$, $B$, and $C$ are positive parameters used to adjust the accuracy of the approximation. We also let $D_c=1$ for $c=1,2,...,C+b$ and $D_c=2$ for $c=0$. 
Considering the approximation of ${{\hat{S}}_{k}}\approx \sqrt{{{\left| \Phi _{l}^\text{U} \right|}_{\text{avg}}}}{{S}_{k}}$ in (\ref{Formula:S_k}) and the CDF of ${{\hat{S}}_{k}}$, i.e., $F_{{\hat{S}}_{k}} (x) = \mathbb{P} \left\{ {{\hat{S}}_{k}} \leq x \right\}$ in (\ref{Formula:F_S_k_2}), 
the CCDF of DSS $S_k$, i.e., $\mathbb{P}\left\{ {{S}_{k}}\ge x \right\}$, is given in (\ref{Formula:CCDF_DS}), completing the proof.
\end{proof}

\begin{proposition} 
    The average MUI received by the typical UT is given by
    \begin{equation}
        \begin{aligned}
        &I_{k,\rm{avg}}^{\rm{Ser}} 
        = \mathbb{E}\left\{ I_{k}^{\rm{Ser}} \right\} \\
        &= \frac{2\pi {{\lambda }_{\rm{S}}}{{R}_{\rm{S}}}\Omega }{\left( 2-\alpha  \right){{R}_{\rm{E}}}} \frac{{{\left| \Phi _{l}^{\rm{U}} \right|}_{\rm{avg}}}-1}{{{\left| \Phi _{l}^{\rm{U}} \right|}_{\rm{avg}}}} \left[ {{r}_{\rm{S},\max }}^{2-\alpha }-{{r}_{\rm{S},\min }}^{2-\alpha } \right].
        \end{aligned}
        \label{Formula:MUI}  
    \end{equation}
    where $\left|\Phi_{l}^{\rm{U}}\right|_{\rm{avg}}$ is provided in (\ref{Formula:number_of_UTs}),
    and $I_{k}^{\rm{Ser}}$ is provided and expressed in (\ref{Formula_Coverage_Probability_def}). 
    Then, the average ISI received by the typical UT is given by
    \begin{equation}
        \begin{aligned}
        I_{k,\rm{avg}}^{\rm{Int}} 
        &= \mathbb{E}\left\{ I_{k}^{\rm{Int}} \right\} \\
        &= \frac{G_{\rm{sl}}}{G_{\rm{ml}}} \frac{2\pi {{\lambda }_{\rm{S}}}{{R}_{\rm{S}}}\Omega }{\left( 2-\alpha  \right){{R}_{\rm{E}}}}\left[ {{r}_{\max }}^{2-\alpha }-{{r}_{\rm{S},\max }}^{2-\alpha } \right],
        \end{aligned}
        \label{Formula:ISI} 
    \end{equation}    
\end{proposition}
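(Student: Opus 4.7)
The plan is to evaluate the two expectations by first using the mutual independence of the small-scale fading coefficients $\{h_{lk}\}$ to collapse each squared sum to a single sum of second moments, and then applying Campbell's theorem to the resulting shot-noise functional on the spherical PPP of SAPs, using the area-derivative $\frac{\partial |\mathcal{A}_r|}{\partial r}=2\pi\frac{R_\text{S}}{R_\text{E}}\, r$ already derived in (\ref{Formula:derivation_A}).

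First I would tackle the MUI term $I_k^{\text{Ser}}$. Expanding $\left|\sum_{l\in\Phi_k^{\text{Ser}}} \frac{1}{\sqrt{|\Phi_l^\text{U}|}}\beta_{lk}^{1/2} h_{lk} \frac{\hat h_{lk'}^*}{|\hat h_{lk'}|}\right|^2$ produces diagonal contributions of the form $\frac{\beta_{lk}}{|\Phi_l^\text{U}|}|h_{lk}|^2$ and cross contributions indexed by $l\neq l'$. Because $h_{lk}$ and $h_{l'k}$ are independent zero-mean Nakagami random variables and the unit-modulus phase rotations $\hat h_{lk'}^*/|\hat h_{lk'}|$ are independent of the channels that appear, every cross term vanishes in expectation. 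Using $\mathbb{E}\{|h_{lk}|^2\}=\Omega$, what remains is $\sum_{k'\neq k}\sum_{l\in\Phi_k^{\text{Ser}}}\frac{\beta_{lk}\Omega}{|\Phi_l^\text{U}|}$. I then apply the mean-field approximation $|\Phi_l^\text{U}|\approx |\Phi_l^\text{U}|_{\text{avg}}$ from (\ref{Formula:number_of_UTs}) so the factor can be taken outside the sum, and the count of $k'\neq k$ contributes a factor $|\Phi_l^\text{U}|_{\text{avg}}-1$. Finally, Campbell's theorem on the SPPP of density $\lambda_\text{S}$ over the service spherical cap yields
\begin{equation*}
\mathbb{E}\!\left\{\sum_{l\in\Phi_k^{\text{Ser}}}\beta_{lk}\right\} = \lambda_\text{S}\!\int_{r_{\text{S},\min}}^{r_{\text{S},\max}}\!\!\beta_0\, r^{-\alpha}\frac{\partial|\mathcal{A}_r|}{\partial r}\,dr = 2\pi\lambda_\text{S}\frac{R_\text{S}}{R_\text{E}}\!\int_{r_{\text{S},\min}}^{r_{\text{S},\max}}\!\! r^{1-\alpha}\,dr,
\end{equation*}
and the elementary antiderivative $r^{2-\alpha}/(2-\alpha)$ delivers exactly (\ref{Formula:MUI}).

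Next I would repeat the argument for the ISI term, which is structurally simpler since there is no inner power-normalization factor and no cross-user summation. Expanding $\left|\sum_{l\in\Phi_k^{\text{Int}}}\beta_{lk}^{1/2} h_{lk}\right|^2$ and invoking the same independence argument eliminates the off-diagonal terms, leaving $\sum_{l\in\Phi_k^{\text{Int}}}\beta_{lk}\,\Omega$. The interference SAPs live on the annular spherical region complementary to the service cap, i.e., distances between $r_{\text{S},\max}$ and $r_{\max}=\sqrt{R_\text{S}^2-R_\text{E}^2}$, so applying Campbell's theorem with the same area-derivative over this new range gives the integral $\int_{r_{\text{S},\max}}^{r_{\max}} r^{1-\alpha}\,dr$, and multiplying by the side-to-main lobe ratio $G_\text{sl}/G_\text{ml}$ recovers (\ref{Formula:ISI}).

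The main obstacle is really a justification issue rather than a computational one: the mean-field replacement $|\Phi_l^\text{U}|\approx|\Phi_l^\text{U}|_{\text{avg}}$ together with the tacit assumption that $\Phi_l^\text{U}$ for a serving SAP $l$ can be treated as independent of the typical UT's position requires invoking Slivnyak's theorem and then accepting a standard approximation that decouples the served-set size from the per-link distance. I would flag this explicitly and note that under dense-user regimes the Poisson count concentrates around its mean, so the bias introduced is mild; the remaining steps (independence of fading, Campbell's formula, elementary integration) are routine.
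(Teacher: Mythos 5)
Your proposal is correct and follows essentially the same route as the paper's proof: drop the cross terms in the squared sum (the paper justifies this via an approximation cited from prior work, you via independence and zero-mean phases), replace $\left| \Phi _{l}^{\text{U}} \right|$ by its mean so the $k'\neq k$ count yields the factor $\frac{\left| \Phi _{l}^{\text{U}} \right|_{\text{avg}}-1}{\left| \Phi _{l}^{\text{U}} \right|_{\text{avg}}}$, and apply Campbell's theorem with the spherical-cap area derivative $\frac{\partial |\mathcal{A}_r|}{\partial r}=2\pi\frac{R_\text{S}}{R_\text{E}}r$ over $[r_{\text{S},\min},r_{\text{S},\max}]$ for the MUI and over $[r_{\text{S},\max},r_{\max}]$ with the $G_\text{sl}/G_\text{ml}$ factor for the ISI. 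Your explicit flagging of the mean-field decoupling of $\left| \Phi _{l}^{\text{U}} \right|$ from the link distances is a welcome clarification but does not change the argument.
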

\begin{proof}
The average MUI is represented as
\begin{align}
        I_{k,\text{avg}}^{\text{Ser}}
        & \overset{\left( a \right)}{\mathop{\textcolor{black}{=}}}\, \mathbb{E}\left\{ I_{k}^{\text{Ser}} \right\} \nonumber \\
        & \overset{\left( b \right)}{\mathop{=}}\,\mathbb{E}\left\{ \sum\limits_{k'\ne k}{{{\left| \sum\limits_{l\in \Phi _{k}^{\text{Ser}}}{\frac{1}{\sqrt{\left| \Phi _{l}^\text{U} \right|}}{{\beta }_{lk}}^{\frac{1}{2}}{{h}_{lk}}} \right|}^{2}}} \right\} \nonumber \\ 
        & \overset{\left( c \right)}{\mathop{\textcolor{black}{\approx}}}\,\mathbb{E}\left\{ \sum\limits_{k'\ne k}{\sum\limits_{l\in \Phi _{k}^{\text{Ser}}}{\frac{1}{\left| \Phi _{l}^\text{U} \right|}{{\beta }_{lk}}{{\left| {{h}_{lk}} \right|}^{2}}}} \right\} \nonumber \\ 
        & \overset{\left( d \right)}{\mathop{\textcolor{black}{\approx}}}\,\mathbb{E}\left\{ \sum\limits_{l\in \Phi _{k}^{\text{Ser}}}{\frac{1}{\left| \Phi _{l}^\text{U} \right|}\sum\limits_{\begin{smallmatrix} 
        k'\ne k \\ 
        k'=1 
        \end{smallmatrix}}^{\left| \Phi _{l}^\text{U} \right|}{{{\beta }_{lk}}{{\left| {{h}_{lk}} \right|}^{2}}}} \right\} \nonumber \\ 
        & \overset{\left( e \right)}{\mathop{\textcolor{black}{=}}}\,\frac{2\pi {{\lambda }_{\text{S}}}{{R}_{\text{S}}}\Omega }{{{R}_{\text{E}}}} \mathbb{E}\left\{\frac{{{\left| \Phi _{l}^\text{U} \right|}}-1}{{{\left| \Phi _{l}^\text{U} \right|}}} \right\} \int_{{{r}_{\text{S},\min }}}^{{{r}_{\text{S},\max }}}{r\cdot {{r}^{-\alpha }}dr} \nonumber \\
        & \overset{\left( f \right)}{\mathop{\textcolor{black}{=}}}\,\frac{2\pi {{\lambda }_{\text{S}}}{{R}_{\text{S}}}\Omega }{{{R}_{\text{E}}}}\frac{{{\left| \Phi _{l}^\text{U} \right|}_{\text{avg}}}-1}{{{\left| \Phi _{l}^\text{U} \right|}_{\text{avg}}}}\int_{{{r}_{\text{S},\min }}}^{{{r}_{\text{S},\max }}}{r\cdot {{r}^{-\alpha }}dr},
    \label{Formula:Interference part proof}  
\end{align}
where 
\textcolor{black}{$(a)$ is based on the definition of expectation over interference \cite{shang2024multi}}
, $(b)$ is due to the nature of normalization of $\frac{\hat{h}_{lk'}^{*}}{\left| {{\hat{h}}_{lk'}} \right|}$ to $1$, $(c)$ follows an approximation in \cite{nigam2014coordinated} that sums up the interference from the remaining, non-service LEO satellites, which considers the effects of MUI resulting from the transmitted signal from $l$-th SAP to all its served UTs, $(d)$ comes from the average number of UTs in the coverage of an SAP, and the exchange of sum terms, $(e)$ is due to the count of the total number of UTs in the coverage of an $l$-th SAP, except the $k$-th UT, which is currently being cooperatively served, and $(f)$ is obtained from the definition of expectation and considering the average total number of UTs in the coverage of an $l$-th SAP, which can be reasonable in the average sense.
Similarly, the ISI can be derived as
\begin{equation}
    \begin{aligned}
        I_{k,\text{avg}}^{\text{Int}}
        & =\mathbb{E}\left\{ I_{k}^{\text{Int}} \right\} \\
        & =\mathbb{E}\left\{ \frac{G_\text{sl}}{G_\text{ml}} \sum\limits_{l\in \Phi _{k}^{\text{Int}}}{{{\beta }_{lk}}{{\left| {{h}_{lk}} \right|}^{2}}} \right\} \\ 
        & =\frac{G_\text{sl}}{G_\text{ml}} \frac{2\pi {{\lambda }_{\text{S}}}{{R}_{\text{S}}}\Omega }{{{R}_{\text{E}}}}\int_{{{r}_{\text{S},\max }}}^{{{r}_{\max }}}{r\cdot {{r}^{-\alpha }}dr}.
    \end{aligned}
\end{equation}
Finally, \textcolor{black}{analytical} expressions are obtained by inserting (\ref{Formula:derivation_A}) and using Campbell's theorem, completing the proof.
\end{proof}

\textcolor{black}{
\begin{remark}
    (Impact of User Number on average MUI). 
    Due to the coefficient $\frac{{{\left| \Phi _{l}^{\rm{U}} \right|}_{\rm{avg}}}-1}{{{\left| \Phi _{l}^{\rm{U}} \right|}_{\rm{avg}}}} = 1 - \frac{1}{{{\left| \Phi _{l}^{\rm{U}} \right|}_{\rm{avg}}}}$ in \eqref{Formula:MUI}, it can be observed that the average MUI at a UT increases as the number of users increases, i.e., the denominator becomes larger. However, this coefficient tends towards a saturation value $1$. 
    This implies that when the user density is low, the MUI increases rapidly; when the user density is high, the MUI tends to stabilize, and the increase in capacity will show a marginal diminishing effect in the high user density range.
\end{remark}
}

\textcolor{black}{
\begin{remark}
    (Impact of Main-Lobe-Side-Lobe Gain Disparity on ISI). 
    The ISI in LEO satellite CF-mMIMO networks is largely dependent on the ratio of main-lobe gain to side-lobe gain, i.e., coefficient $\frac{G_\text{sl}}{G_\text{ml}}$ in \eqref{Formula:ISI}. Specifically, both the increase in side-lobe gain and the decrease of main-lobe gain will lead to a general increased ISI.
    This implies that while the main-lobe ensures reliable signal reception for intended UTs, side-lobe radiation may accumulate as dominant interference sources for UTs located in neighboring beams. 
    Therefore, the antenna pattern design directly determines the intensity of network interference. It can be more effective to increase the ratio of $\frac{G_\text{sl}}{G_\text{ml}}$ than simply increasing the number of SAPs without antenna gain considerations.
\end{remark}
}

\begin{theorem} 
    The coverage probability on the Nakagami-$m$ fading channel is given in (\ref{Formula:Coverage Probability}) \textcolor{black}{at the top of the next page}, where $s=\frac{A+i2\pi c}{2 \sqrt{{{\left| \Phi _{l}^{\rm{U}} \right|}_{\rm{avg}}} \left[\gamma_{\rm{th}} \left( I_{k,\rm{avg}}^{\rm{Ser}}+I_{k,\rm{avg}}^{\rm{Int}} \right) + \frac{\gamma_{\rm{th}} \sigma^2}{\rho_{\rm{d}} G_{\rm{ml}}} \right]} }$, $I_{k,{\rm{avg}}}^{\rm{Ser}}$ and $I_{k,{\rm{avg}}}^{\rm{Int}}$ are given in Proposition 2.

\begin{figure*}[ht]
\setlength{\abovecaptionskip}{-1.5cm}
\setlength{\belowcaptionskip}{-0.5cm}
\normalsize
% \vspace{-1mm}
% \hrulefill
\begin{equation}
    \begin{aligned}
        & \mathbb{P}_k^{{\text{cov}}}({\gamma_\text{th}};{\lambda_\text{S}},{\lambda_\text{U}},{R_\text{S}},{\rho_\text{d}},{\eta},{m}) 
        \lesssim 1 - \frac{2^{-B}\rm{exp}\left( \frac{\textit{A}}{2}\right)\sum_{\textit{b}=0}^\textit{B}\left( \tbinom{\textit{B}}{\textit{b}}\sum_{\textit{c}=0}^\textit{C+b}\frac{(-1)^\textit{b}}{\textit{D}_\textit{c}} \rm{Re} \left[ \frac{1}{\textit{s}}\rm{exp}\left( -2\pi\lambda_S\frac{R_S}{R_E}\Theta(R_S,\eta,\textit{m}) \right) \right]\right)} {\sqrt{{{\left| \Phi _{l}^\text{U} \right|}_{\text{avg}}} \left[\gamma_\text{th} \left( I_{k,\text{avg}}^{\text{Ser}}+I_{k,\text{avg}}^{\text{Int}} \right) + \frac{\gamma_\text{th} \sigma^2}{\rho_\text{d} G_{\text{ml}}} \right]} }.
    \end{aligned}
\label{Formula:Coverage Probability}
\end{equation}
\hrulefill
\end{figure*}
\end{theorem}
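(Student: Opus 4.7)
The plan is to collapse the coverage probability expression in (\ref{Formula_Coverage_Probability_def}) into a one-dimensional CCDF evaluation by conditioning away the randomness of the interference. Starting from (\ref{Formula_Coverage_Probability_def}), the coverage probability is already of the form $\mathbb{P}\{S_k \geq X\}$, where $X = \sqrt{\gamma_{\text{th}}(I_k^{\text{Ser}} + I_k^{\text{Int}}) + \gamma_{\text{th}}\sigma^2/(\rho_\text{d} G_{\text{ml}})}$ is itself a random quantity coupled to $S_k$ through the common underlying SAP PPP. Since the joint distribution of $S_k$ and $X$ is intractable, the strategy is to replace $X$ by a deterministic surrogate using the mean interference, so that the CCDF in Proposition~1 can be invoked directly at a fixed argument.

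The first step is to substitute the random aggregate interferences $I_k^{\text{Ser}}$ and $I_k^{\text{Int}}$ by their first moments $I_{k,\text{avg}}^{\text{Ser}}$ and $I_{k,\text{avg}}^{\text{Int}}$ provided in Proposition~2. This turns the random threshold $X$ into the deterministic quantity
\[
\bar{x} = \sqrt{\gamma_{\text{th}}\bigl(I_{k,\text{avg}}^{\text{Ser}} + I_{k,\text{avg}}^{\text{Int}}\bigr) + \gamma_{\text{th}}\sigma^2/(\rho_\text{d} G_{\text{ml}})}.
\]
The second step is to plug $x = \bar{x}$ into the CCDF formula (\ref{Formula:CCDF_DS}) of Proposition~1. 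Under this substitution, the Laplace-variable $s = (A + i 2\pi c)/(2\sqrt{|\Phi_l^{\text{U}}|_{\text{avg}}}\, x)$ that appears in Proposition~1 becomes exactly the $s$ stated in Theorem~1, and the claimed expression (\ref{Formula:Coverage Probability}) follows by direct identification of the factors in front and inside the sums.

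The hard part will be justifying the mean-interference substitution, which is why the theorem is stated with $\lesssim$ rather than equality. This substitution compounds with the perfect-CSIT relaxation already used in (\ref{Formula_Coverage_Probability_def}), where $\hat{h}_{lk}^{*}/|\hat{h}_{lk}|$ was replaced by $h_{lk}^{*}/|h_{lk}|$, yielding two stacked approximation errors. To keep the bound meaningful I would argue that the shot-noise interference concentrates around its mean: $I_k^{\text{Ser}}$ aggregates on the order of $|\Phi_l^{\text{U}}|_{\text{avg}}$ approximately independent per-user contributions, and $I_k^{\text{Int}}$ spans the entire out-of-cap PPP, so the standard deviation of the threshold is small relative to its mean. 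A variance/Chebyshev-style argument on the shot noise would quantify the residual error; at the level of the closed-form expression this mean-value replacement is the standard stochastic-geometry move, and it is this step---rather than the otherwise mechanical plug-in of Proposition~1---that carries the analytical weight of the proof.
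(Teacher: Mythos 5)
Your proposal matches the paper's proof: the paper likewise replaces the random aggregate interference $I_k^{\text{Ser}}+I_k^{\text{Int}}$ by the means from Proposition~2, inserts the resulting deterministic threshold into the CCDF of Proposition~1, and justifies the mean substitution by a concentration-around-the-mean argument (stated in the paper as a Law-of-Large-Numbers remark validated by Monte Carlo simulations). Your identification of $s$ and of the mean-interference substitution as the step carrying the $\lesssim$ is exactly what the paper does.
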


\begin{proof}
    Due to the characterization of CCDF, by inserting the right-hand side in the last step of (\ref{Formula_Coverage_Probability_def}), which includes the interference signal power in Proposition 2 and the noise power $\sigma^2$, into the CCDF of the desired signal part shown in (\ref{Formula:CCDF_DS}) of Proposition 1, an approximate expression is obtained for the coverage probability of CF-mMIMO SatCom networks in (\ref{Formula:Coverage Probability}) \textcolor{black}{at the top of the next page}, which completes the proof.
\end{proof}

\begin{remark}
    This step is an approximation that assumes the aggregate interference exhibits concentration around its mean due to the large number of contributing interferers. This is justified under the Law of Large Numbers, which holds in our setting, where spatial randomness and statistical independence govern the distribution of interference sources. This approximation has been validated through extensive Monte Carlo simulations, where it is observed that the instantaneous interference variation is tightly centered around $\mathbb{E} \left[ I_{k}^{\text{Ser}} + I_{k}^{\text{Int}} \right]$.
\end{remark}

While the derived expressions in Section \uppercase\expandafter{\romannumeral3} above are not presented in a sufficiently concise form due to the inherent complexity of the geometry of the CF-mMIMO LEO SatCom network and the cooperative service mechanisms; they remain analytically structured in terms of key system parameters. This enables numerical evaluation to extract insights into performance trends and parameter sensitivities shown in the following section, which is essential for network designs.

\section{Simulation Results and Discussions}
In this section, we quantitatively investigate the downlink performance of the CF-mMIMO satellite network with respect to essential parameters, including dome angle, orbital altitude and number of SAPs, Nakagami fading parameter $m$, etc. 
The analytical results are provided based on the statistical properties and expressions in Sections \uppercase\expandafter{\romannumeral2} and \uppercase\expandafter{\romannumeral3}.
We perform Monte Carlo simulations to obtain the DSS, coverage probability, capacity, and other tentative indices for numerical and analytical verification. 
Unless otherwise noted, the values for the system parameters are summarized in Table \ref{tab:system parameters} according to \cite{shang2024multi,park2022tractable,abdelsadek2023broadband,3gpp.38.821}, and other parameters are based on different scenarios.
\textcolor{black}{
Note that the analytical results for coverage probability in the following figures are mainly based on \eqref{Formula:Coverage Probability}, and those for capacity are mainly based on \eqref{Formula:syst_cap_cf}, \eqref{Formula:syst_cap_Nearest}, \eqref{Formula:per_cap_cf}, \eqref{Formula:per_cap_Nearest}, which will be shown in the following subsections.
}

\begin{table}[t]
\begin{center}
\captionsetup{font={normalsize}}
\caption{System parameters}
\label{tab:system parameters}
\small
\begin{tabular}{cc}
\hline
Parameter                      & Value       \\ \hline
Radius of Earth $R_\text{E}$              & $6371.393$ km \\
Orbital altitude $H_\text{S}$             & $500$ km \\
Dome angle $\eta$    & $75^{\circ}$ \\
Density of CF-mMIMO SAPs $\lambda_\text{S}$  & $1 \times 10^{-5}$/km\textsuperscript{2}   \\
Density of served UTs  $\lambda_\text{U}$    & $3\times 10^{-6}$/km\textsuperscript{2}   \\
Carrier frequency $f_\text{c}$             & $2$ GHz \\
Transmit power of downlink data & $33$ dBm         \\
Transmit power of pilot symbol  & $30$ dBm         \\
Noise power  $\sigma^2$       & $-100$ dBm     \\
SAP antenna gain $G_{\text{ml/sl}}$     & $30$ dBi/$20$ dBi  \\ 
UT antenna gain $G_\text{r}$               & $0$ dBi        \\
Nakagami parameter $m$               & $2$       \\
Path-loss exponent $\alpha$    & $2$  \\\hline
\end{tabular}
\end{center}
\end{table}

\captionsetup{font={scriptsize}}
\begin{figure}[tp]
\begin{center}
\setlength{\abovecaptionskip}{+0.2cm}
\setlength{\belowcaptionskip}{-0.4cm}
\centering
  \includegraphics[width=3.4in, height=2.7in]{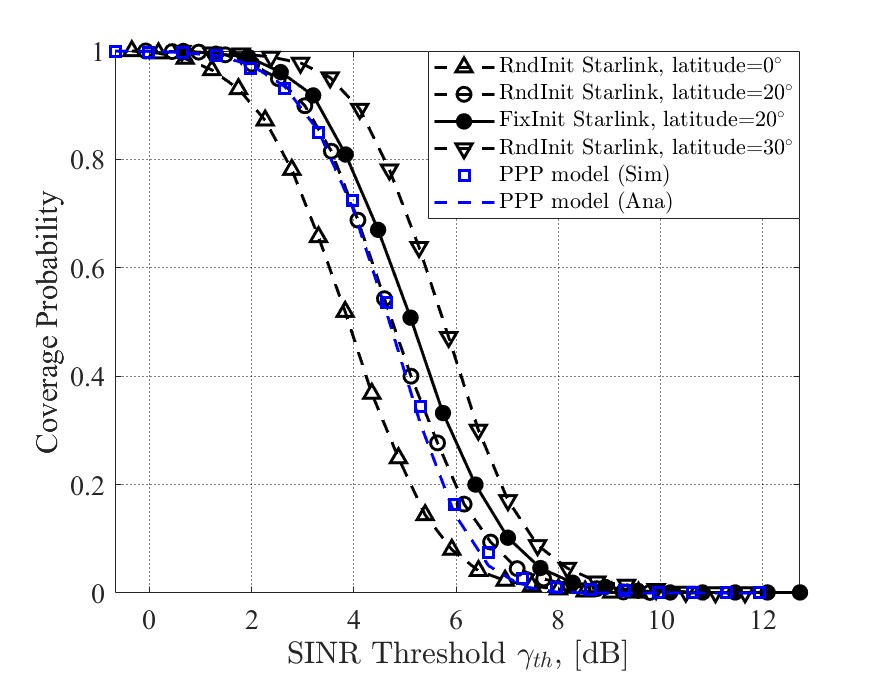}
\renewcommand\figurename{Fig.}
\caption{\scriptsize Comparison of coverage probability among PPP model, random initialized Starlink, and fixed initialized Starlink.}
\label{fig:Fig_PPPvsStarlink}
\end{center}
\end{figure}

\captionsetup{font={scriptsize}}
\begin{figure}[tp]
\begin{center}
\setlength{\abovecaptionskip}{+0.2cm}
\setlength{\belowcaptionskip}{-0.4cm}
\centering
  \includegraphics[width=3.4in, height=2.7in]{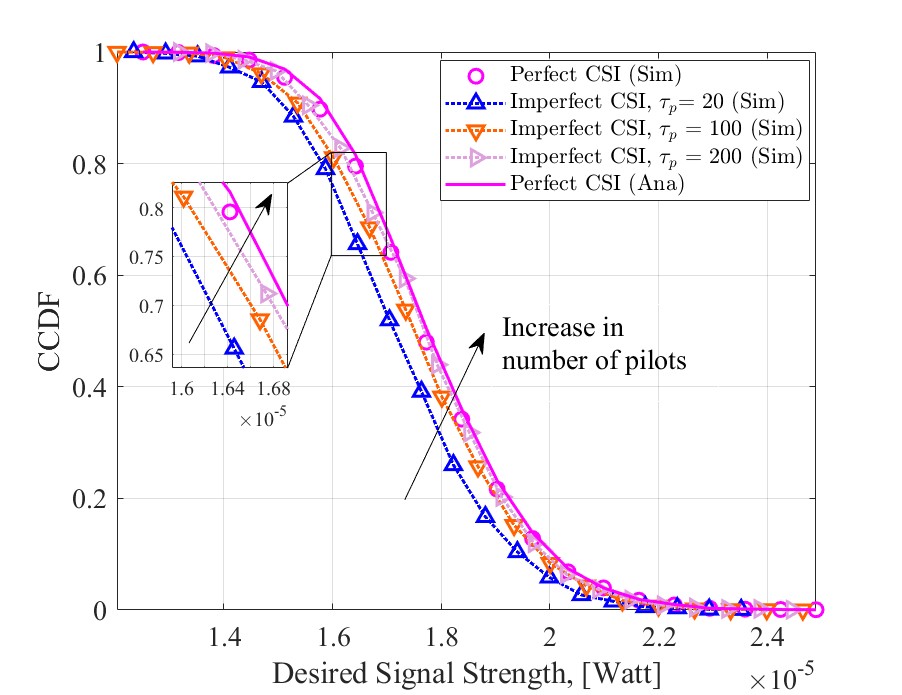}
\renewcommand\figurename{Fig.}
\caption{\scriptsize CCDF of desired signal strength, where $\eta=90^{\circ}$.}
\label{fig:Fig_pilotnbr}
\end{center}
\end{figure}

\subsection{Coverage Probability}

To validate the suitability of the PPP model for the studied network, we first compare the coverage probability under the PPP model, random-initial-state and fixed-initial-state Starlink constellation at an example orbital altitude of $500$ km \cite{jia2023analytic,al2021modeling}.
\textcolor{black}{
Note that the Starlink constellation is simulated with three typical inclination angles, i.e., $33^{\circ}$, $43^{\circ}$, $53^{\circ}$, each having $28$ orbital planes.}

\textcolor{black}{
Fig. \ref{fig:Fig_PPPvsStarlink} shows that when observed from a location at a latitude of $20^{\circ}$, the coverage gap between the PPP model and the fixed-initial-state Starlink constellation is more significant than that between the PPP model and the random-initial-state Starlink constellation. 
In addition, the observed discrepancies at three different latitudes primarily result from the uneven satellite distribution in Starlink constellations, with higher density at mid-latitudes and lower density near the equator. 
This observation reinforces the necessity of our findings. Despite its abstraction, the PPP-based analytical model effectively captures average performance trends and provides theoretical design insights.
}
This indicates that the PPP model can fit the actual constellation well with the random-initial-state Starlink constellation model. 
As the actual constellation is close to the random-initial-state Starlink constellation, it is reasonable to use the PPP model for approximation.
Moreover, it is assumed that the distribution of SAPs is not correlated with the PPP for the derivation of analytical results.

\captionsetup{font={scriptsize}}
\begin{figure}[tp]
\begin{center}
\setlength{\abovecaptionskip}{+0.2cm}
\setlength{\belowcaptionskip}{-0.4cm}
\centering
  \includegraphics[width=3.4in, height=2.7in]{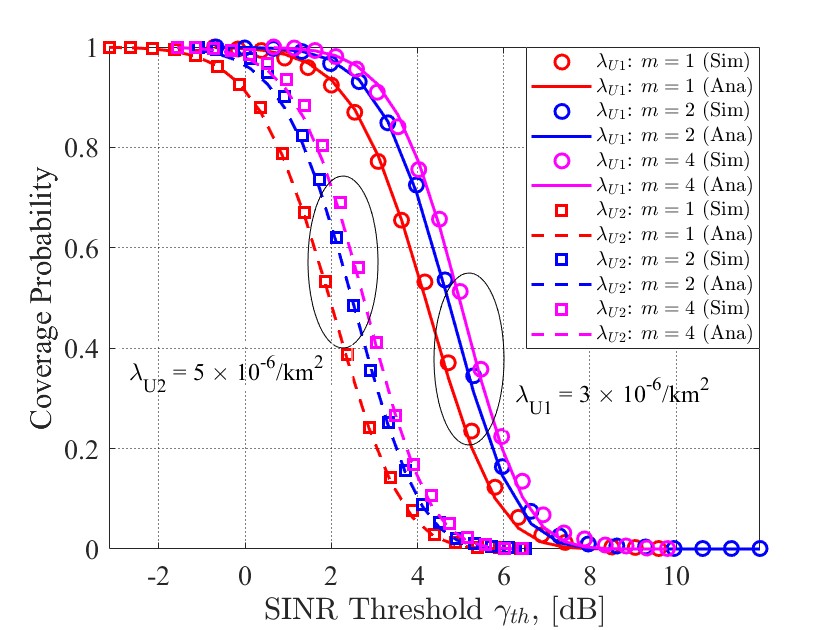}
\renewcommand\figurename{Fig.}
\caption{\scriptsize Coverage probability versus SINR thresholds for different Nakagami fading parameters $m$.}
\label{fig:Fig_m}
\end{center}
\end{figure}

\captionsetup{font={scriptsize}}
\begin{figure}[tp]
\begin{center}
\setlength{\abovecaptionskip}{+0.2cm}
\setlength{\belowcaptionskip}{-0.4cm}
\centering
  \includegraphics[width=3.4in, height=2.7in]{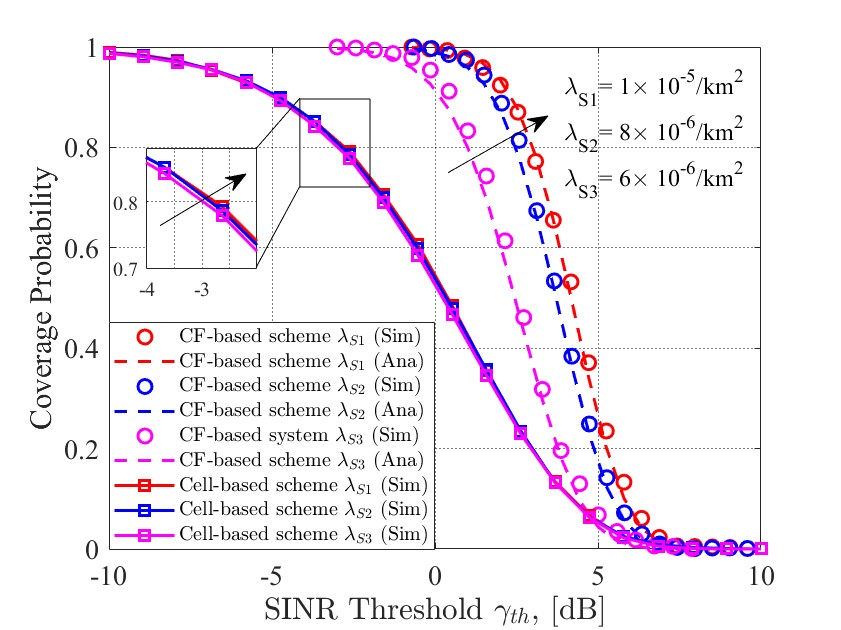}
\renewcommand\figurename{Fig.}
\caption{\scriptsize Coverage probability comparison between CF-based SatCom networks and Cell-based SatCom networks.}
\label{fig:Fig_cov_baseline}
\end{center}
% \vspace{-4mm}
\end{figure}

\captionsetup{font={scriptsize}}
\begin{figure}[tp]
\begin{center}
\setlength{\abovecaptionskip}{+0.2cm}
\setlength{\belowcaptionskip}{-0.4cm}
\centering
  \includegraphics[width=3.4in, height=2.7in]{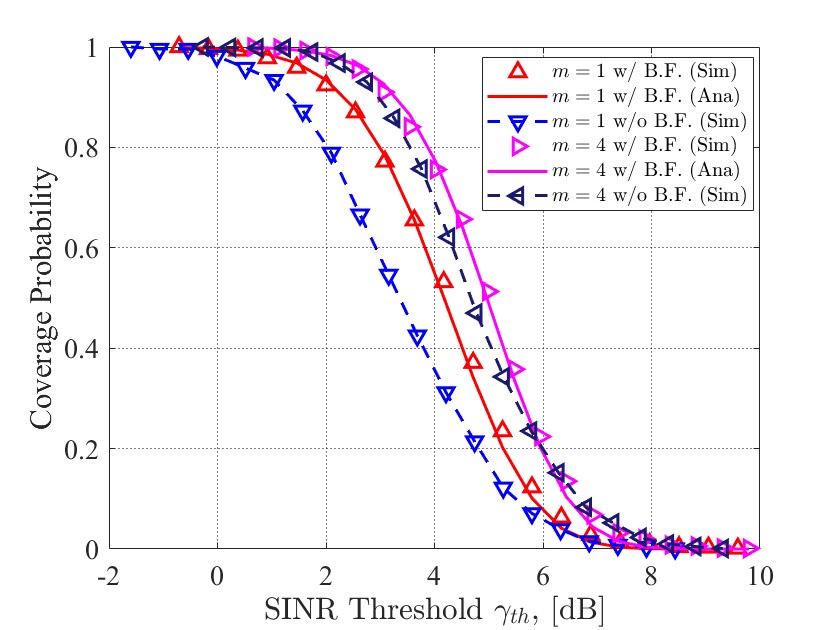}
\renewcommand\figurename{Fig.}
\caption{\scriptsize Coverage probability versus the SINR threshold with and without beamforming.}
\label{fig:Fig_BF}
\end{center}
% \vspace{-4mm}
\end{figure}

To understand how well the analytical DSS matches its simulation counterparts, we compare its CCDF with that under perfect and imperfect CSI in Fig. \ref{fig:Fig_pilotnbr}. The CCDF of DSS with imperfect CSI depends on the number of pilots during the uplink training stage. 
Specifically, when $\tau_\text{p}=20$, an apparent gap can be witnessed between the CCDF of DSS under imperfect CSI and that under perfect CSI. 
However, the gap reduces as more pilots are utilized, e.g., $\tau_\text{p}=100$, $\tau_\text{p}=200$. We can see that the increase in the pilots' number gives rise to a more accurate DSS, approximating one under perfect CSI. 
It can also be seen that the analytical DSS matches the simulation DSS well under perfect CSI. Thus, it is reasonable to use this analytical expression for further investigation. This also shows that by increasing the number of pilots, a more accurate estimated channel can be obtained.

\textcolor{black}{
Fig. \ref{fig:Fig_m} compares the coverage probability under different UT densities and Nakagami fading parameters $m$. The analytical expressions derived from Theorem 1 are shown to closely match the simulation results at different values of $m$, thus validating the precision of the theoretical analysis.
Focusing on the case where the UT density is $\lambda_\text{U} = 3\times 10^{-6}$/km$^2$, it is observed that the coverage probability improves as $m$ increases. Specifically, within the SINR threshold range of approximately $2$ dB to $8$ dB, the case with $m=4$ achieves the highest coverage probability, followed by the cases with $m=2$ and $m=1$, respectively. The performance gap between $m=4$ and $m=1$ reaches its maximum of around $0.25$ at $\gamma_\text{th}=5$ dB and gradually decreases beyond $\gamma_\text{th}=6$ dB. A similar but less evident trend is also evident between the $m=4$ and $m=2$ cases. 
This behavior persists when the UT density increases to $\lambda_\text{U}=5\times 10^{-6}$/km$^2$. In general, it is also revealed that higher UT densities lead to lower coverage probabilities due to increased aggregate MUI, regardless of the fading parameter $m$.
}

\textcolor{black}{
In Fig. \ref{fig:Fig_cov_baseline}, we present a comparative analysis between the proposed CF-based satellite architecture and the existing cell-based architecture, where each typical UT is served by its nearest satellite \cite{park2022tractable}. In the cell-based architecture, the coverage probability exhibits only marginal improvements as the satellite density increases. This is because a higher satellite density slightly increases the likelihood that a UT will be served by a closer satellite, thus improving the received signal strength.
In contrast, the CF-based satellite architecture demonstrates a more pronounced improvement in coverage probability with increasing satellite density. In particular, even with moderate densities such as $\lambda_\text{S2}$ and $\lambda_\text{S3}$, the CF-based scheme consistently outperforms the cell-based counterpart. Furthermore, when the satellite density is $\lambda_\text{S1}=1\times10^{-5}$/km$^{2}$, the CF-based architecture offers significantly higher coverage throughout the SINR threshold range. This highlights the potential of CF-mMIMO-enabled LEO SatCom systems to fully exploit dense satellite deployments, thereby ensuring robust and enhanced coverage.
}

\textcolor{black}{
Fig. \ref{fig:Fig_BF} presents a comparison of the coverage probability under Nakagami-$m$ fading channels with beamforming (B.F.) and without beamforming (w/o B.F.). Both simulation and analytical results are provided for $m=1$ and $m=4$. It is evident from the figure that, for a given SINR threshold, beamforming consistently enhances coverage performance across both fading scenarios. For instance, when $m=1$, the curve with beamforming significantly outperforms its non-beamforming counterpart, with a noticeable coverage gap of over $0.2$ around $\gamma_\text{th}=3$ dB. A similar trend is observed for $m=4$, although the performance gap is less pronounced due to the reduced severity of the fading. Moreover, when comparing different fading parameters, the improvement offered by beamforming is more substantial under severe fading conditions, i.e., smaller $m$. This suggests that beamforming is particularly effective in mitigating deep fades, thereby improving the reliability of the CF-mMIMO SatCom network.
}

\captionsetup{font={scriptsize}}
\begin{figure}[tp]
\begin{center}
\setlength{\abovecaptionskip}{+0.2cm}
\setlength{\belowcaptionskip}{-0.4cm}
\centering
  \includegraphics[width=3.4in, height=2.7in]{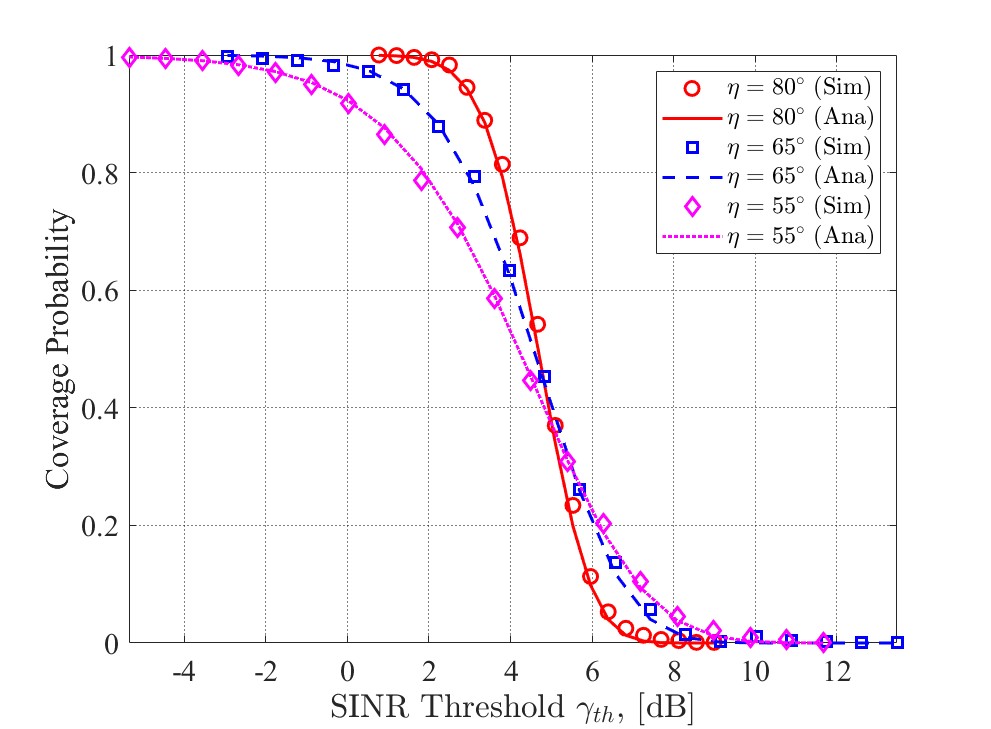}
\renewcommand\figurename{Fig.}
\caption{\scriptsize Coverage probability versus SINR threshold and $\eta$.}
\label{fig:Fig_eta}
\end{center}
%\vspace{+6mm}
\end{figure}

\captionsetup{font={scriptsize}}
\begin{figure}[tp]
\begin{center}
\setlength{\abovecaptionskip}{+0.2cm}
\setlength{\belowcaptionskip}{-0.4cm}
\centering
  \includegraphics[width=3.4in, height=2.7in]{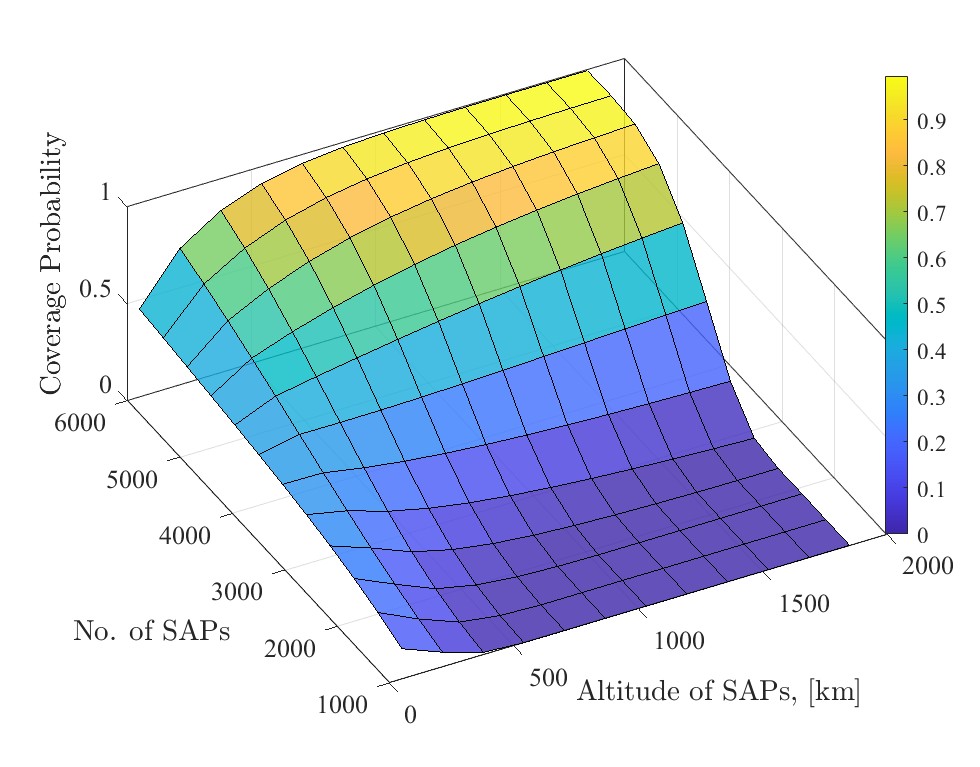}
\renewcommand\figurename{Fig.}
\caption{\scriptsize Coverage probability under different altitudes and numbers of SAPs in a 3D plot, where $\gamma_\text{th}=3$ dB.}
\label{fig:Fig_nbrSAP_altitude}
\end{center}
% \vspace{-4mm}
\end{figure}

\captionsetup{font={scriptsize}}
\begin{figure}[tp]
\begin{center}
\setlength{\abovecaptionskip}{+0.2cm}
\setlength{\belowcaptionskip}{-0.4cm}
\centering
  \includegraphics[width=3.4in, height=2.7in]{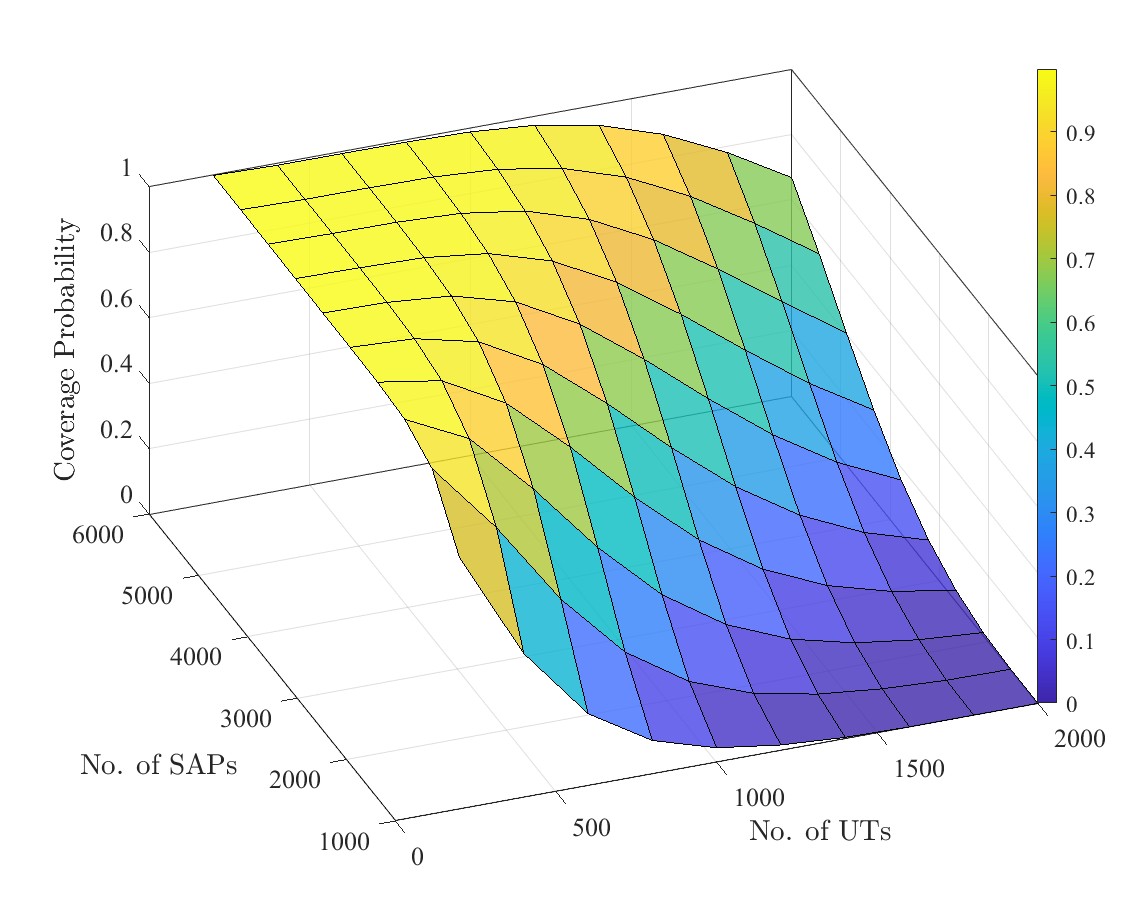}
\renewcommand\figurename{Fig.}
\caption{\scriptsize Coverage probability under different numbers of SAPs and numbers of UTs in a 3D plot, where $\gamma_\text{th}=3$ dB.}
\label{fig:Fig_nbrUT_nbrSAP}
\end{center}
\end{figure}

Then, we examine the impact of the service range, which is determined by the dome angle $\eta$ on the coverage probability. Fig. \ref{fig:Fig_eta} presents the coverage probability for different values of $\eta$, i.e., $\eta=55^{\circ},65^{\circ},80^{\circ}$ \cite{pachler2021updated}. 
It can be observed that at lower SINR thresholds, e.g., below $5$ dB, a larger $\eta$ enhances the coverage probability, as the expanded service range increases the number of available service satellites. This improves received signal strength and mitigates deep fading through cooperative transmission, even in the presence of moderate MUI.
\textcolor{black}{
However, as the SINR threshold increases from $5$ dB to $12$ dB, the curves cross over and the trend reverses. For smaller $\eta$, i.e., $55^\circ$, the limited service range reduces both the number of serving signals and the MUI. 
In this regime, the serving signals dominate due to stronger direct links, while ISI remains relatively moderate because of longer propagation distances and attenuation by side-lobe gain. 
Conversely, for larger $\eta$, i.e., $80^\circ$, although more service satellites are available and enhance the desired signal at low thresholds, the accompanying MUI grows rapidly because it is also subject to the main-lobe gain. Meanwhile, ISI does not scale at the same rate under side-lobe gain. 
This imbalance leads to an earlier degradation of coverage when the SINR threshold becomes more stringent, which explains the crossover between different $\eta$ curves. Consequently, the benefits of expanding the service dome angle become saturated and can even reverse in the high-SINR regime.}
Eventually, for all values of $\eta$, the coverage converges to near zero as the threshold increases, as the strict SINR requirement becomes increasingly difficult to meet. It is therefore suggested that a proper SINR threshold be carefully chosen to ensure the desired per-user and system performance.

\textcolor{black}{
Next, the dual influences of the orbital altitude and the number of SAPs are shown in Fig. \ref{fig:Fig_nbrSAP_altitude} in a three-dimensional (3D) view. For a fixed altitude within the given range, increasing the number of SAPs always brings higher coverage.
However, when the number of SAPs is less than approximately $3500$, coverage decreases with rising altitude, while it increases with altitude when the number of SAPs exceeds $3500$.
Given $\gamma_\text{th}=3$ dB, when the total number of SAPs is relatively small, the increase in the number of SAPs cannot offset the desired signal reduction due to the extended propagation distances from increased altitude.
In addition, we study the effect of the number of UTs and the number of SAPs on the coverage probability in Fig. \ref{fig:Fig_nbrUT_nbrSAP}. It is shown that higher coverage is achieved with fewer UTs and more SAPs. This is consistent with terrestrial CF-mMIMO systems, where the performance of an individual UT is enhanced when a large number of APs serve a smaller number of UTs.
}

\captionsetup{font={scriptsize}}
\begin{figure}[tp]
\begin{center}
\setlength{\abovecaptionskip}{+0.2cm}
\setlength{\belowcaptionskip}{-0.0cm}
\centering
  \includegraphics[width=3.4in, height=2.7in]{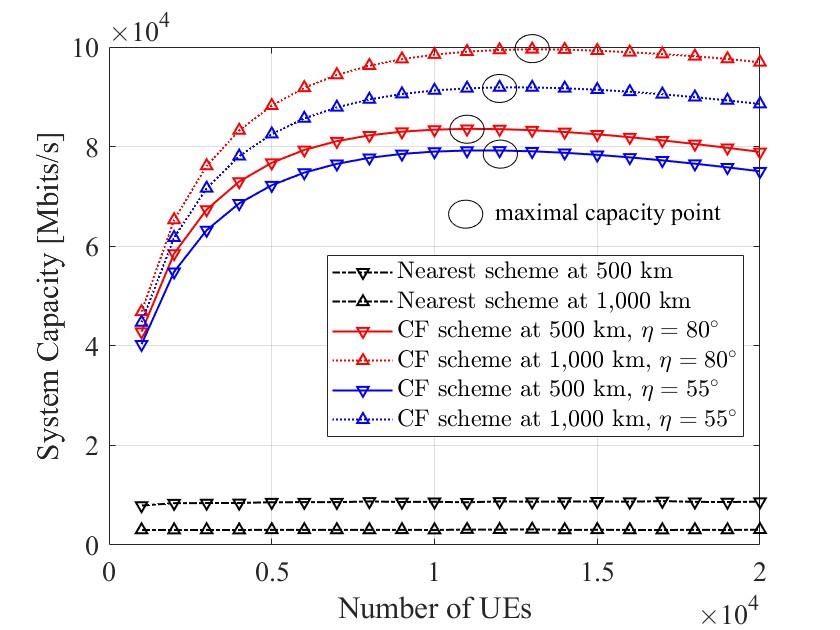}
\renewcommand\figurename{Fig.}
\caption{\scriptsize Ergodic system capacity versus number of UTs, where $B=30$ MHz.}
\label{fig:Fig_syst_cap}
\end{center}
\end{figure}

\subsection{System and Individual Capacity}

Fig. \ref{fig:Fig_syst_cap} compares the ergodic system capacity of the CF-mMIMO SatCom network, i.e., CF scheme, with that of the nearest-satellite-service scheme in \cite{park2022tractable},  where a UT is served only by the nearest satellite.
\textcolor{black}{
There exists a relationship between coverage probability and spectral efficiency according to a mathematical basis \cite[Appendix C]{jo2012heterogeneous}
\begin{equation}
    \mathbb{E}\{X\}=\int_{0}^{\infty}\mathbb{P}\{X>x\}dx,
\end{equation}
for $X>0$, which will bring 
\begin{equation}
    \begin{aligned}
        \mathbb{E}\left\{ \mathrm{log}_2 (1+\mathrm{SINR}_k) \right\}
        &= \int_{0}^{\infty} \mathbb{P} \left\{ \mathrm{SINR}_k > 2^t-1 \right\} dt,
    \end{aligned}
\end{equation}
for the $k$-th UT \cite{li2024analytical}.
It should also be noted that potential channel estimation overhead and downlink transmission durations will be considered.
}
Then, the system capacity of the CF scheme is defined as
\begin{equation}
    \begin{aligned}
        \text{C}_\text{system}^\text{CF}
        &= N_\text{U} B \frac{\tau_\text{c}-\tau_\text{p}}{\tau_\text{c}} \mathbb{E} \left\{\rm{log}_2 (1+\text{SINR}_\textit{k}) \right\} \\
        &= N_\text{U} B \frac{\tau_\text{c}-\tau_\text{p}}{\tau_\text{c}}  \int_{0}^{\infty} \mathbb{P}_{k}^{\text{cov}} \left\{ \text{SINR}_k > 2^t -1 \right\} dt,
    \end{aligned}
    \label{Formula:syst_cap_cf}
\end{equation}
and that of the nearest-satellite-service scheme is given by
\begin{equation}
    \begin{aligned}
        \text{C}_\text{system}^\text{Nearest}
        &= N_\text{U} B \int_{0}^{\infty} \mathbb{P}_{k}^{\text{cov}} \left\{ \text{SINR}_k > 2^t -1 \right\} dt,
    \end{aligned}
    \label{Formula:syst_cap_Nearest}
\end{equation}
where $N_\text{U}$ denotes the total number of UTs on the earth's surface, $\tau_\text{p}=200$, and $\tau_\text{c}=500$.
\textcolor{black}{
In the nearest-satellite-service scheme, for altitudes of $500$ km and $1,000$ km, a slight increase in the system capacity can be observed when the number of UTs grows from $1,000$ to $3,000$ before reaching a plateau.
The system capacity at $500$ km is greater than that at $1,000$ km.
In contrast, for the CF scheme, the system capacity at $1,000$ km is greater than that at $500$ km for both $\eta=80^{\circ}$ and $\eta=55^{\circ}$. This is because a higher altitude at a fixed $\eta$ incorporates more SAPs into the network. 
}

\textcolor{black}{
Moreover, there is an optimal number of UTs for each $\eta$ at a certain altitude. 
For instance, for $\eta=80^{\circ}$ and an altitude $1,000$ km, when the total number of UTs increases to more than $13,000$, the system capacity of the CF scheme begins to decline, though it still significantly outperforms the nearest-satellite-service scheme.
It is worth mentioning that the peak in system observed in Fig. \ref{fig:Fig_syst_cap} results from the inherent trade-off between user multiplexing and MUI. Initially, increasing the number of UTs enhances overall throughput, as the system can support more simultaneous transmissions. However, beyond a certain point, the resulting increase in MUI significantly degrades SINR at the UTs, leading to a decrease in total capacity. This trend highlights the importance of optimal user scheduling and interference-aware resource management in CF-mMIMO LEO SatCom networks.
}

The above schemes are also compared regarding per-user capacity in Fig. \ref{fig:Fig_perUT_cap}, which is defined as
\begin{equation}
    \text{C}_\text{per-user}^\text{CF} = \text{C}_\text{system}^\text{CF} / N_\text{U},
    \label{Formula:per_cap_cf}
\end{equation}
\begin{equation}
    \text{C}_\text{per-user}^\text{Nearest} = \text{C}_\text{system}^\text{Nearest} / N_\text{U},
    \label{Formula:per_cap_Nearest}
\end{equation}
for the CF scheme \cite{ngo2017cell} and nearest-satellite-service scheme, respectively.
\textcolor{black}{
All cases exhibit a decreasing trend. In particular, the CF scheme with $\eta=80^{\circ}$ and an altitude $1,000$ km yields the highest per-user capacity across all numbers of UTs. This can also be attributed to the reception of more desired signals from a larger number of SAPs when the altitude is higher and the dome angle $\eta$ is larger. All CF schemes provide higher per-user capacity compared to the nearest-satellite-service scheme.
However, the per-user capacity tends to approach that of the nearest-satellite-service scheme. This suggests that the significant performance gain of CF-mMIMO SatCom may diminish as the number of UTs increases and under heavy user loads.
}

\captionsetup{font={scriptsize}}
\begin{figure}[tp]
\begin{center}
\setlength{\abovecaptionskip}{+0.2cm}
\setlength{\belowcaptionskip}{-0.0cm}
\centering
  \includegraphics[width=3.4in, height=2.7in]{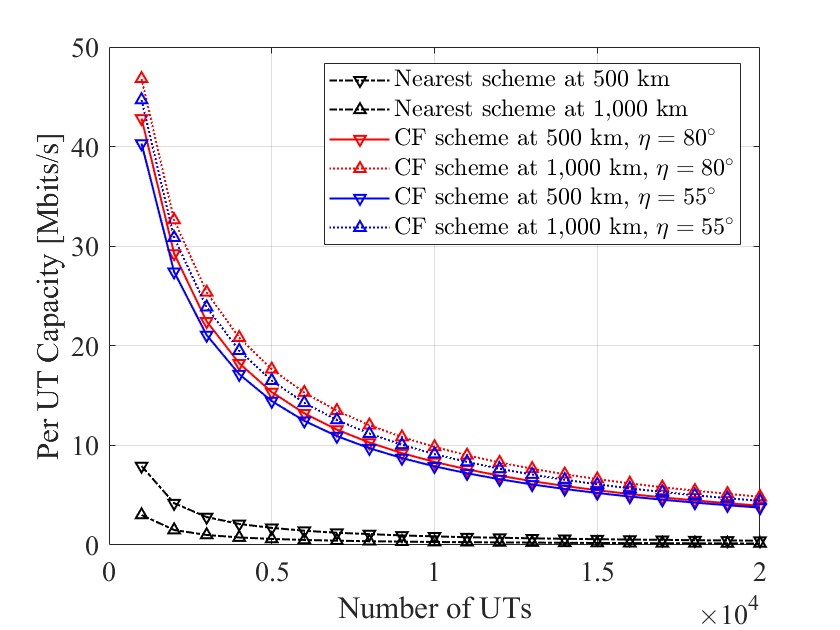}
\renewcommand\figurename{Fig.}
\caption{\scriptsize Ergodic per-user capacity versus number of UTs, where $B=30$ MHz.}
\label{fig:Fig_perUT_cap}
\end{center}
% \vspace{-7mm}
\end{figure}

\subsection{Future Explorations Discussion}
This paper focuses on the fundamental downlink performance of CF-mMIMO in LEO satellite mega-constellations. Several promising directions remain open for exploration, including coordinated SAP clustering, multi-antenna extensions, and advanced beamforming designs.

\subsubsection{Coordinated SAPs Clustering}
In future satellite networks, especially dense deployment scenarios, the fully distributed design in this paper may be insufficient when suppressing MUI. To address such potential limitations, a lightweight coordinated clustering and limited CSI exchange framework may be considered for CF-mMIMO LEO SatCom networks to improve scalability \cite{bjornson2020scalable}.

Geographically or topologically adjacent SAPs could dynamically form coordination clusters based on spatial correlation, user density, or overlap in signal coverage. Within each cluster, SAPs can exchange partial CSI, such as large-scale fading statistics or quantized channel direction information, rather than full instantaneous CSI. This limited information exchange enables more informed beamforming and scheduling strategies \cite{ammar2021user}, thus improving interference suppression with minimal coordination overhead. 
Given the constraints of LEO satellite systems, such as restricted inter-satellite communication and processing delays, such a lightweight scheme could strike a practical balance between performance and scalability.

\subsubsection{Multi-antenna Extensions}
Both SAPs and UTs can be extended to support multiple-antenna configurations. Herein, the assumption of a single antenna is made to simplify the initial analysis and provide clarity in stochastic geometry modeling. To partially reflect the directional enhancement brought by multi-antenna beamforming, this paper incorporates an antenna gain pattern that distinguishes between main-lobe and side-lobe gains. 
\textcolor{black}{
Such main-lobe/side-lobe modeling has been adopted in stochastic geometry and satellite communications analyses \cite{park2022tractable,jia2022uplink,jia2023analytic}, which can capture key beamforming benefits such as improved signal strength and reduced interference. 
This enables a clear examination of basic system performance without adding complexity of multi-antenna settings.
}

While the single-antenna setup may not fully reflect all features of a conventional massive MIMO system, it provides a practical and scalable foundation, especially relevant for scenarios with limited satellite payload or simplified deployment requirements. The proposed model can be extended in future studies to incorporate multiple antennas per SAP, aligned with the conventional massive MIMO configurations.

\subsubsection{Beamforming Designs} 
Conjugate beamforming is a well-established technique to simplify the distributed beamforming process with relatively low implementation complexity and minimal backhaul signaling requirements \cite{yang2013performance}.
While alternative linear precoding schemes such as zero-forcing (ZF) beamforming \cite{zhang2019cell}, MMSE beamforming \cite{palhares2021robust}, and beamforming with feedback channels \cite{dakkak2022evaluation} can offer improved interference suppression, they typically require significantly higher coordination overhead and centralized processing\footnote{Note that the implementation complexity is especially relevant in the context of LEO satellite mega-constellations with stringent on-board processing constraints and rapid channel variation. Thus, the trade-off between the complexity of implementation and the achievable performance of these techniques is of interest and remains an important direction for future research.} compared to conjugate beamforming\footnote{Our current analysis based on conjugate beamforming provides a useful baseline for more sophisticated precoding techniques and for understanding the fundamental performance characteristics of CF-mMIMO in LEO satellite networks. The adoption of more sophisticated precoding techniques is left for further work in non-terrestrial CF-mMIMO.} \cite{ngo2017cell}.

Generally, our antenna and beamforming settings serve as a reasonable starting point for stochastic geometry-based CF-mMIMO for LEO satellite mega-constellations. These settings establish a benchmark and examine fundamental system behaviors for future theoretical analysis and performance comparisons.

\section{Conclusion}
In this paper, we analyzed the CF-mMIMO LEO SatCom network in terms of coverage and capacity from a system-level perspective. 
Using tools from stochastic geometry, we modeled the satellite network in two SPPPs and derived the coverage probability in fading scenarios, accounting for key network parameters, including the Nakagami fading parameter, path-loss factor, orbital altitude, number of SAPs, and service range influenced by the dome angle. The Laplace transform and relevant numerical approximation methods were used to analyze the desired and interference signals. Operation mechanisms regarding technical explanations for potential considerations were also discussed to facilitate practical implementations.

Although concise-form solutions are not always feasible in the context of CF-mMIMO LEO SatCom networks, the derived expressions facilitate system evaluations that reveal performance trends and key trade-offs, maintaining the analytical value of the stochastic framework.
Based on these numerical and analytical results, we find that beamforming plays an essential role in network performance, and the direct distance-related propagation path dominates signal quality. Increasing the service range, orbital altitude, and number of SAPs improves coverage performance.
Furthermore, while there is an optimal number of UTs to maximize the system capacity, the capacity for each individual UT declines as the number of UTs increases. Although additional SAPs can be accommodated by raising their orbital altitudes, the corresponding increase in costs and communication latency should be considered in real-world satellite network deployment.

\bibliographystyle{IEEEtran}
\bibliography{references.bib}

\end{document}